\newcites{app}{Appendix References}
\newtheorem{fact}{Fact}[section]
\newtheorem{lemma}[fact]{Lemma}
\newtheorem{theorem}[fact]{Theorem}
\newtheorem{definition}[fact]{Definition}
\newtheorem{proposition}[fact]{Proposition}
\newtheorem{remark}[fact]{Remark}
\date{}
\title{Connectome Smoothing via Low-rank Approximations}
\author{
Runze Tang\textsuperscript{1},
Michael Ketcha\textsuperscript{2},
Alexandra Badea\textsuperscript{3},  \\
Evan D.~Calabrese\textsuperscript{3},
Daniel S.~Margulies\textsuperscript{4},
Joshua T.~Vogelstein\textsuperscript{2,5}, \\
Carey E.~Priebe\textsuperscript{1},
Daniel L.~Sussman\textsuperscript{6*}}
\renewcommand{\Re}{\mathbb{R}}
\newcommand{\Ex}{\mathbb{E}}
\begin{document}

\maketitle




{\footnotesize
\noindent \textbf{1} Department of Applied Math \& Statistics, The Johns Hopkins University, Baltimore, MD
\\
\textbf{2} Department of Biomedical Engineering,  The Johns Hopkins University, Baltimore, MD
\\
\textbf{3} Department of Radiology, and Department of Biomedical Engineering, Duke University, Durham, NC
\\
\textbf{4} Max Planck Research Group for Neuroanatomy \& Connectivity, Max Planck Institute for Human Cognitive and Brain Sciences, Leipzig, Germany;
\\
\textbf{4} Child Mind Institute, New York, NY
\\
\textbf{5} Department of Mathematics \& Statistics, Boston University, Boston, MA
\\
* sussman@bu.edu
}

\begin{abstract}
In brain imaging and connectomics, the study of brain networks, estimating the mean of a population of graphs based on a sample is a core problem.
Often, this problem is especially difficult because the sample or cohort size is relatively small, sometimes even a single subject, while the number of nodes can be very large with noisy estimates of connectivity.
While the element-wise sample mean of the adjacency matrices is a common approach, this method does not exploit underlying structural properties of the graphs.
We propose using a low-rank method which incorporates dimension selection and diagonal augmentation to smooth the estimates and improve performance over the na\"ive methodology for small sample sizes.
Theoretical results for the stochastic blockmodel show that this method offers major improvements when there are many vertices.
Similarly, we demonstrate that the low-rank methods outperform the standard sample mean for a variety of independent edge distributions as well as human connectome data derived from magnetic resonance imaging, especially when sample sizes are small.
Moreover, the low-rank methods yield ``eigen-connectomes'', which correlate with the lobe-structure of the human brain and superstructures of the mouse brain, and enable estimation of latent connectome structure.
These results indicate that low-rank methods are an important part of the toolbox for researchers studying populations of graphs in general, and statistical connectomics in particular.
\\
{\bf Keywords}: networks, connectome, low-rank, estimation
\end{abstract}

\section{Introduction}\label{sec:intro}

Estimating a population mean based on a sample of brain networks, as represented by their structural connectomes estimated from DTMRI, is a key challenge.
Generally, estimation of a population mean based on samples is at the core of statistics.
For a distribution of graphs, we define the mean graph as the expectation of the adjacency matrix, regardless of the data distribution. 
The sample mean, motivated by its intuitive appeal, the law of large numbers, and the central limit theorem, has its place as an important statistic for mean graph estimation.

However, the mean of a population of graphs is a high-dimensional object, consisting of $O(N^2)$ parameters for graphs with $N$ nodes (vertices).
When the number of samples $M$ is much smaller than $N^2$, or even $N$, estimating such a high-dimensional estimands using naive unbiased methods can lead to inaccurate estimates with very high variance.
Such small-$M$  large-$N$ occur frequently with new higher-resolution technologies as well as when the focus is on a smaller harder-to-sample subpopulation.
With MRI technologies improving in resolution and newer imaging modalities such as electron microscopy becoming more prominent, ever larger connectomes will be under study.
Similarly, in animal studies, studies of rare diseases, and when estimating means for small subpopulations, the number of sample graphs $M$ may be quite small.

Furthermore, using poor estimates  can lead to low power and accuracy for subsequent inference tasks.
By exploiting a bias-variance trade-off, it is often fruitful to develop estimators which have some bias but greatly reduced variance.

When these estimators are biased towards low-dimensional structures which well approximate the full dimensional population mean, major improvements in estimation can be realized~\cite{trunk1979problem}.
Furthermore, a more parsimonious representation of the mean improves interpretability and allows for novel exploratory analyses.

In statistical connectomics, \cite{ginestet2014hypothesis,Bhattacharyya2018-wj} propose a way to test if there is a difference between the distributions for two groups of networks.
While hypothesis testing is the end goal of their work, estimation is a key intermediate step which may be improved by accounting for underlying structure in the mean matrix.
Thus, improving the estimation procedures for the mean graph is not only important by itself, but also can be applied to help improve other statistical inference procedures.
Indeed, for the task of community detection, \cite{Bhattacharyya2018-wj} propose the use of low-rank mean graphs to find clusters across networks.

The entry-wise sample mean is a reasonable estimator if one is unwilling to take any additional structure into account.
However, with only a small sample size, the entry-wise sample mean does not perform very well due to high variance.
Intuitively, an estimator incorporating known structure in the true distribution of the graphs, assuming the estimator is computationally tractable, is preferable to the entry-wise sample mean.

We propose an estimator based on a low-rank structure of a family of random graphs.
Section~\ref{sec:phat} discusses details about this estimator.
These estimates can improve performance since they have much lower overall variance than naive entry-wise sample means, thereby offsetting the bias towards the low-rank structure in terms of overall error.
Additionally, \cite{Udell2016-yj} has argued that even many full-rank statistical models may be well approximated by low-rank structures.

The proposed method builds on the idea of approximating \added{a} matrix with a low-rank matrix \cite{Eckart1936-tt} but also incorporates aspects specific to networks such as bounded entries and missing diagonal entries.
Selection of the rank for the approximation is a notoriously difficult problem and our work considers two methods for automatic rank selection.

Other matrix decompositions such as non-negative matrix factorization~\cite{Berry2007-ok} and non-negative SVD~\cite{Boutsidis2008-mx} aim to decompose a matrix into non-negative factors.
This often has an intuitive appeal when some true latent factors are themselves non-negative, such as for dictionary learning tasks.
As our primary motivation is estimating the mean graph, non-negativity constraints are not needed and we are still able to use lower-dimensional factors for further exploration.
Alternatively, from the motivation of clustering networks themselves, \cite{Lee2014-vm}  has used non-negative matrix factorization of of vectorized adjacency matrices as a preliminary dictionary learning step.

The use of low-rank and spectral methods is not new to connectomics, neuroscience, and biology \cite{Banerjee2007-mk}.
\cite{Rahim2017-ci} employs low-rank shrinkage for estimation of functional connectivity. 
Connectome harmonics \cite{Atasoy2016-ip} have been investigated to understand the dynamics of neural networks \cite{Margulies2016-jj}.
The  Laplacian spectrums of connectomes have been used for comparison across species \cite{De_Lange2014-hd}.
Our work specifies when and how these types of methods can be used to improve performance of mean graph estimation.

Importantly, low-rank methods also produce parsimonious and interpretable representations of the data as represented by the random dot product graph model.
The proposed algorithm can be viewed as yielding estimates of low-rank latent structure in the population, regardless of whether the true mean graph is low-rank or not.
As we demonstrate, the interpretations of these latent parameter estimates
illustrate hypotheses which relate the structure of the connectome to well established anatomical structures (see Fig.~\ref{fig:eigenvector_brain} and Section~\ref{section:lobe_structure}) and suggest a basis for mapping structural hierarchies in brain organization.

\section{Statistical Connectome Models}
\label{section:model}

In connectomics, brain imaging data for each subject can be processed  to output a graph, where each vertex represents a well-defined anatomical region present in each subject.
For structural brain imaging, such as diffusion tensor MRI, an edge may represent the presence of anatomical connections between the two regions as estimated using tractography algorithms~\cite{gray2012magnetic}.
For functional brain imaging, such as fMRI, an edge between two regions may represent the presence of correlated brain activity between the two regions.

This work considers the scenario of observing $M$ graphs, represented as adjacency matrices, $A^{(1)},A^{(2)},\dotsc,A^{(M)}$, each having $N$ vertices with $A^{(m)}\in\{0,1\}^{N\times N}$ for each $m$.
We assume there is a known correspondence for vertices in different graphs, so that vertex $i$ in graph $m$ corresponds to vertex $i$ in graph $m'$ for any  $i$, $m$, $m'$.
The graphs we consider are undirected and unweighted with no self-loops, so each $A^{(m)}$ is a binary symmetric matrix with zeros along the diagonal.


We assume that the graphs are sampled independently and identically from some distribution andthe mean graph is the expectation of each adjacency matrix.
\begin{definition}[Mean Graph]
Suppose $A^{(1)},\dotsc,A^{(M)}\stackrel{iid}{\sim} \mathcal{G}$ for some random graph distribution $\mathcal{G}$, with $A^{(m)}\in\{0,1\}^{N\times N}$ for each $m$.
The {\em mean graph} is $\Ex[A^{(m)}]$, which we will denote by $P$.
\end{definition}
Note, this definition requires no assumption on the distribution $\mathcal{G}$.

\subsection{Independent Edge Model}
The most general model we consider is the independent edge model (IEM) with parameter given by a mean graph $P \in [0,1]^{N\times N}$ \cite{bollobas2007phase}.
An edge exists between vertex $i$ and vertex $j$ with probability $P_{ij}$, and each edge is present independently of all other edges.
Note that the IEM is a generalization of the Erd\H{o}s-R{\'e}nyi random graphs, where each edge is present with probability $p$ independently of all other edges \cite{Gilbert1959-ba,Erdos1959-ln}.

The assumption of independent edges, while not necessary to define the mean graph, is important for our theory.
In practice the independent edge assumption may not hold, as brain graphs will have complex spatial and functional dependencies.
On the other hand, if $\Ex[A^{(m)}]$ has low-rank structure then our methods below can still improve performance.
Regardless of the structure $P$, the low-rank method below will provide an estimate of the low-rank structure of $P$.

\subsection{Random Dot Product Graph}

The random dot product graph model (RDPG) \cite{young2007random, nickel2007random}, assigns latent positions to each vertex in a graph where the probability of an edge being present between two nodes is the dot product of their latent vectors \cite{hoff2002latent}.

%
Formally, let $\mathcal{X} \subset \Re^d$ be a set such that $x, y \in \mathcal{X}$ implies $x^{\top} y =\sum_{i = 1}^d x_i y_i \in [0, 1]$.
Let $X_1,\dotsc,X_N\in \mathcal{X}$ be iid column vectors representing the $N$ latent positions and let $X = [X_1|\cdots|X_N]^{\top} \in \Re^{N \times d}$.
Note, the entries of $X$ are not required to be non-negative and for these purposes this introduce no interpretability issues.
We assume that $X$ is the same for all observed graphs.

A  random adjacency matrix $A$ is said to be an RDPG if for each adjacency matrix $a\in\{0,1\}^{n\times n}$,
\[
    \Pr[A = a|X] = \prod_{i<j} (X_i^{\top} X_j)^{a_{ij}} ( 1 - X_i^{\top} X_j)^{1 - a_{ij}}.
\]

Conditioned on the latent positions, the RDPG is an independent edge model with probability matrix given by the outer product of the latent position matrix with itself, $P = X X^{\top}$.
This imposes that $P$ is positive-semidefinite and $\mathrm{rank}(P)=\mathrm{rank}(X)\leq d$.

\subsection{Stochastic Blockmodel as an RDPG}
\label{section:sbm_rdpg}

One of the most common structures for graphs is that vertices tend to cluster into communities.
Vertices of the same community behave similarly, connecting to similar sets of nodes.

This structural property is captured by the stochastic blockmodel (SBM) \cite{holland1983stochastic}, where each vertex is assigned to a block and the probability that an edge exists between two vertices depends only on their respective block memberships.


The SBM is parameterized by the number of blocks $K$ (generally much less than the number of vertices $N$), the block probability matrix $B \in [0,1]^{K \times K}$, and the vector of iid block memberships
$\tau\in\{1,\dotsc,K\}^N$. For each $i \in [N]$, $\tau_i = k$ means vertex $i$ is a member of block $k$.
Denote as $\rho_k$ the probability $\Pr[\tau_i = k]$ for each $k$.
We assume that the $\tau$ vector is the same for all observed graphs.

Conditioned on $\tau$, $A_{ij} \stackrel{ind}{\sim} \mathrm{Bern}(B_{\tau_i,\tau_j})$.
To ensure that the SBM can be considered as an RDPG, we always impose that the $B$ matrix for the SBM is positive semidefinite (see Appendix~\ref{app:outline_proof}).



To better describe complex network structures, many
generalizations of the SBM incorporate the variation of vertices within blocks. 
For example, \cite{airoldi2008mixed} proposed mixed membership stochastic blockmodels, 
and \cite{karrer2011stochastic} proposed degree-corrected SBM.
These generalizations aim to capture variations among vertices while maintaining parts of the original community structure.
The RDPG is useful in this regard since any SBM with degree-correction and mixed-membership can be represented as an RDPG and visa versa \cite{Lyzinski2014-az,Rubin-Delanchy2017-av}.

\section{Estimators}
\label{sec:estimator}

\subsection[Element-wise Sample Mean]{Element-wise Sample Mean $\bm{\bar{A}}$}
\label{sec:abar}

The most natural estimator to consider is the element-wise sample mean.
This estimator, defined as $\bar{A}=\frac{1}{M}\sum_{m=1}^M A^{(m)}$, is the  maximum likelihood estimator (MLE) for the mean graph $P$ if the graphs are sampled from an IEM distribution.
It is unbiased so $\Ex[\bar{A}]=P$ with entry-wise variance $\mathrm{Var}(\bar{A}_{ij}) = P_{ij} (1-P_{ij})/M$. 
Moreover, for the independent edge model, $\bar{A}$ is the uniformly minimum-variance unbiased estimator, so it has the smallest variance among all unbiased estimators.
Similarly, it enjoys the many asymptotic properties of the MLE as $M\to \infty$ for fixed $N$.
However, if graphs with a large number of vertices are of interest, $\bar{A}$ is not consistent for $P$ as the number of vertices $N$ becomes large for fixed $M$, while our estimator $\hat{P}$ from Section~\ref{sec:phat} is consistent for low-rank $P$.

Additionally, $\bar{A}$ does not exploit any low-dimensional structure.
If the graphs are distributed according to an RDPG or SBM, then $\bar{A}$ is no longer the maximum likelihood estimator since it is not guaranteed to satisfy the properties of the mean graph for that model.
In the RDPG case, \deleted{since} $\mathrm{rank}(\bar{A})$ will with high probability not be equal to $d$, and for the SBM case, $\bar{A}$ will not have the structure of an SBM mean graph, with $K$ distinct rows and columns.
Hence, in either case $\bar{A}$ will be outside the parameter space and idadmissible.

The performance can be especially poor when the sample size $M$ is small, such as when $M\ll N$.
For example, when $M=1$, $\bar{A}$ is simply the binary adjacency matrix $A^{(1)}$, which is an inaccurate estimate for an arbitrary $P$ compared to estimates which exploit underlying structure, such as the low-rank structure of the RDPG model.

\subsection[Low-Rank Estimator]{Low-Rank Estimator $\bm{\hat{P}}$}
\label{sec:phat}

Motivated by the low-rank structure of the RDPG mean matrix, we propose the estimator $\hat{P}$ based on the spectral decomposition of $\bar{A}$, yielding a low rank approximation of $\bar{A}$.
Low-rank methods for matrices can be viewed as being closely related to ideas such as principal components analysis.
Viewed from a regularization perspective, the low-rank approximation finds an estimate with many zero-eigenvalues which can be viewed as analogous to an $L_0$ constraint on the eigenvalues of the solutions.
These estimates are also related to methods employing nuclear norm penalization~\cite{chatterjee2015matrix}.
%

This estimator is similar to the estimator proposed by~\cite{chatterjee2015matrix} but incorporates additional adjustments which serve to improve the performance for the specific task of estimating the mean graph.
Additionally, we consider an alternative dimension selection technique.
Details of the dimension selection procedures and the diagonal augmentation are in Appendices~\ref{app:dim_select} and \ref{app:diag_aug}, respectively.
To summarize, the overall strategy to compute $\hat{P}$ is described in Algorithm~\ref{algo:basic}.
A key component of this algorithm is the low-rank approximation from Algorithm~\ref{algo:lowrank}.

\begin{algorithm}[H]
\caption{Rank-$d$ approximation of a matrix.}
\label{algo:lowrank}
\begin{algorithmic}[1]
\REQUIRE Symmetric matrix $A\in \Re^{N\times N}$, dimension $d\leq N$.
\ENSURE $\mathrm{lowrank}_d(A)\in \Re^{N\times N}$
\STATE Compute the algebraically largest $d$ eigenvalues of $A$, $s_1\geq s_2\geq \dotsc\geq s_d$ and corresponding orthonormal eigenvectors $u_1,u_2,\dotsc,u_d\in \Re^N$;
\STATE $\hat{S} \leftarrow \mathrm{diag}(s_1,\dotsc,s_d)$ and  $\hat{U} \leftarrow [u_1,\dotsc,u_d]$;
\STATE Return $\hat{U}\hat{S}\hat{U}^{\top}$;
\end{algorithmic}
\end{algorithm}

The first step is to calculate the sample mean $\bar{A}$.
In Step 2 to Step 4, the algorithm augments the diagonal of $\bar{A}$ based on \cite{marchette2011vertex}, selects the dimension $\hat{d}$ to embed (see Appendix~\ref{app:dim_select}), and computes the low-rank approximation $\tilde{P}^{(0)}$ based on the embedding. 
Then in Step 5 and Step 6, the algorithm augments the diagonal again based on~\cite{scheinerman2010modeling} (see Appendix~\ref{app:diag_aug}) which yields an improved low-rank estimate $\tilde{P}^{(1)}$.
Finally, Step 7 thresholds the matrix entries to ensure all elements are between 0 and 1.

\begin{algorithm}
\caption{Algorithm to compute $\hat{P}$}
\label{algo:basic}
\begin{algorithmic}[1]
\REQUIRE Adjacency matrices $A^{(1)}, A^{(2)}, \cdots, A^{(M)}$, with each $A^{(m)} \in \{0,1\}^{N \times N}$
\ENSURE Estimate $\hat{P}\in[0,1]^{N\times N}$
\STATE $\bar{A} \leftarrow \left(\sum\limits_{m = 1}^M A^{(m)}\right)/M$;
\STATE $D^{(0)} \leftarrow \mathrm{diag}(\bar{A} \bm{1})/(N-1)$;
\STATE $\hat{d} \leftarrow \mathrm{dimselect}(\bar{A} + D^{(0)})$; (see Appendix~\ref{app:dim_select})
\STATE $\tilde{P}^{(0)} \leftarrow \mathrm{lowrank}_{\hat{d}}(\bar{A} + D^{(0)})$; (see Algorithm~\ref{algo:lowrank})
\STATE $D^{(1)} \leftarrow \mathrm{diag}(\tilde{P}^{(0)})$;
\STATE $\tilde{P}^{(1)} \leftarrow \mathrm{lowrank}_{\hat{d}}(\bar{A} + D^{(1)})$; (see Algorithm~\ref{algo:lowrank})
\STATE $\hat{P} \leftarrow \min(\max(\tilde{P}^{(1)}, 0), 1)$.
\end{algorithmic}
\end{algorithm}



For a given dimension $d$ we consider the estimator $\mathrm{lowrank}_d(\bar{A})$ defined as the best rank-$d$ positive-semidefinite approximation of $\bar{A}$.
Let $\hat{S}$ be a diagonal matrix with non-increasing entries along the diagonal corresponding to the largest $d$ eigenvalues of $\bar{A}$ and let $\hat{U}$ have columns given by the corresponding eigenvectors. Similarly, let $\tilde{S}$ be the diagonal matrix with non-increasing entries along the diagonal corresponding to the remaining $N - d$ eigenvalues of $\bar{A}$ and let $\tilde{U}$ have columns given by the corresponding eigenvectors.
Since the graphs are symmetric, the eigen-decomposition can be computed as $\bar{A}$ as $\hat{U} \hat{S} \hat{U}^{\top} + \tilde{U}\tilde{S}\tilde{U}^{\top}=[\hat{U}|\tilde{U}] (\hat{S}\oplus \tilde{S}) [\hat{U}|\tilde{U}]^T$.
The $d$-dimensional adjacency spectral embedding (ASE) of $\bar{A}$ is given by $\hat{X}=\hat{U} \hat{S}^{1/2}\in \Re^{N \times d}$.
For an RDPG, the rows of $\hat{X}$ are estimates of the latent vectors for each vertex \cite{sussman2014consistent}.
Using the adjacency spectral embedding, the low-rank approximation \replaced{of}{is} $\bar{A}$ \replaced{is}{to be} $\hat{X} \hat{X}^{\top}=\hat{U}\hat{S}\hat{U}^{\top}$.
Algorithm~\ref{algo:lowrank}  gives the steps to compute this low-rank approximation for a general symmetric matrix $A$.


To compute the estimator $\hat{P}$, the rank $d$  must be specified; there are various ways of dealing with dimension selection.
In this paper, we explore an elbow selection method proposed in \cite{zhu2006automatic} and the universal singular value thresholding (USVT) method \cite{chatterjee2015matrix}.
Appendix~\ref{app:dim_select} discusses details of these methods.

Moreover, when the adjacency matrices are hollow, with zeros along the diagonal, there is a missing data problem that leads to inaccuracies if $\hat{P}$ is computed based only on $\bar{A}$.
To compensate for this issue, we use an iterative method developed in \cite{scheinerman2010modeling}.
Appendix~\ref{app:diag_aug} discusses details of the iterative method.



\section{Theoretical Results}
\label{section:theoretical_result}

To estimate the mean of a collection of graphs, we compare the two estimators from Section~\ref{sec:estimator}: the entry-wise sample mean $\bar{A}$ and the low-rank $\hat{P}$ motivated by the RDPG.
The mean squared errors (MSE) for our estimators are $\mathrm{MSE}(\hat{P}_{ij})=\Ex[\hat{P}_{ij}-P]^2$ and $\mathrm{MSE}(\bar{A})=\Ex[\bar{A}_{ij}-P]^2$.
The relative efficiency for two estimators is the ratio of their MSE, $\mathrm{RE}(\bar{A}_{ij},\hat{P}_{ij}) = \frac{\mathrm{MSE}(\hat{P}_{ij})}{\mathrm{MSE}(\bar{A}_{ij})}$, with values above 1 indicating $\bar{A}$ should be preferred while values below 1 indicate $\hat{P}$ should be preferred.
Relative efficiency is a useful metric for comparing estimators because it will frequently be invariant to the scale of the noise in the problem and hence is comparable across different settings.

\subsection{SBM}

In this section, entry-wise relative efficiency is computed to analyze the performance of these two estimators under the SBM.
The asymptotic relative efficiency is defined as $\lim_{N\to \infty}\mathrm{RE}$.
We also define the scaled relative efficiency, $N\cdot \mathrm{RE}(\bar{A}_{ij},\hat{P}_{ij})$ which normalizes the relative efficiency so that the asymptotic scaled relative efficiency is non-zero and finite.
Somewhat surprisingly, the results indicate that the asymptotic relative efficiency will not depend on the sample size $M$.


For this asymptotic framework, 
the proportion of vertices in block $k$, $|\{i:\tau_i=k\}|/N$, will converge to $\rho_k$ as $N\to\infty$ by the law of large numbers.

Denote the block probability matrix as $B = \nu \nu^{\top} \in [0, 1]^{K \times K}$.
By definition, the mean of the collection of graphs generated from this SBM is $P \in [0, 1]^{N \times N}$, where $P_{ij} = B_{\tau_i, \tau_j}$. After observing $M$ graphs on $N$ vertices $A^{(1)}, \cdots, A^{(M)}$ sampled independently from the SBM conditioned on $\tau$, the two estimators can be calculated, $\bar{A}$ and $\hat{P}$.

\begin{lemma}
\label{lm:VarPhat}
For the above setting, for any $i \ne j$, if $\mathrm{rank}(B)=K=d$; for large enough $N$,
\[
    \Ex[(\mathrm{lowrank}_d(\replaced{\bar{A}}{P})_{ij} - P_{ij})^2] \approx
    \frac{1/\rho_{\tau_i} + 1/\rho_{\tau_j}}{M N} P_{ij}(1-P_{ij}),
\]
and
\[
    \lim_{N \to \infty} N \cdot \mathrm{Var}(\mathrm{lowrank}_d(\replaced{\bar{A}}{P})_{ij}) =
    \frac{1/\rho_{\tau_i} + 1/\rho_{\tau_j}}{M} P_{ij} (1 - P_{ij}).
\]
\end{lemma}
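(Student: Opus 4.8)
The plan is to treat $\bar A$ as a perturbation of the rank-$d$ mean matrix $P$ and expand the low-rank estimator to first order in the noise. Write $\bar A + D^{(1)} = P + E$, where $D^{(1)}$ is the diagonal augmentation from Step~5 of Algorithm~\ref{algo:basic} and $E$ collects the centered noise; conditional on $\tau$, the off-diagonal entries $E_{ab}$ ($a<b$) are independent with mean $0$ and variance $P_{ab}(1-P_{ab})/M$, and the role of the augmentation is precisely to make $E$ approximately mean-zero on the diagonal as well, removing the deterministic $O(1)$ error that the hollow diagonal of $\bar A$ would otherwise inject into the spectrum. Since the best rank-$d$ positive-semidefinite approximation can be written as $\hat P = \hat\Pi(P+E)\hat\Pi$, where $\hat\Pi = \hat U\hat U^{\top}$ projects onto the leading $d$-dimensional eigenspace of $\bar A + D^{(1)}$, the whole analysis reduces to a perturbation expansion of the eigenprojector.

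First I would record the eigenprojector perturbation. Let $P = USU^{\top}$ with $\Pi = UU^{\top}$, $\Pi_\perp = I - \Pi$, and $P^{+} = US^{-1}U^{\top}$. Because the trailing $N-d$ eigenvalues of $P$ vanish, the first-order expansion of the projector is $\hat\Pi - \Pi \approx \Pi_\perp E P^{+} + P^{+} E \Pi_\perp$, and substituting into $\hat P = \hat\Pi(P+E)\hat\Pi$ and simplifying via $\Pi P = P\Pi = P$ and $P^{+} P = \Pi$ yields the clean first-order identity
\[
\hat P - P \approx E - \Pi_\perp E \Pi_\perp = \Pi E \Pi + \Pi_\perp E \Pi + \Pi E \Pi_\perp.
\]
The key structural fact is that for an SBM with full-rank $B$ the range of $P$ equals the column space of the block-membership matrix $Z$, so $\Pi = Z(Z^{\top} Z)^{-1}Z^{\top}$ is exactly the block-averaging projector: its $(a,b)$ entry equals $1/n_{\tau_a}$ when $\tau_a = \tau_b$ and $0$ otherwise, where $n_k = |\{i:\tau_i=k\}|$. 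Writing $\mathbf{1}_k$ for the indicator vector of block $k$ and taking $i$ in block $k$, $j$ in block $\ell$ with $k\ne\ell$, this turns the $(i,j)$ entry into the interpretable sum
\[
(\hat P - P)_{ij} \approx \frac{1}{n_\ell}\sum_{b:\tau_b=\ell} E_{ib} + \frac{1}{n_k}\sum_{a:\tau_a=k} E_{aj} - \frac{1}{n_k n_\ell}\sum_{\tau_a=k,\tau_b=\ell} E_{ab},
\]
namely (row $i$ averaged over block $\ell$) plus (column $j$ averaged over block $k$) minus (the full block-pair average).

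The variance computation then follows directly. Conditioning on $\tau$ and using independence of the off-diagonal entries of $E$, the first two terms contribute $B_{k\ell}(1-B_{k\ell})/(M n_\ell)$ and $B_{k\ell}(1-B_{k\ell})/(M n_k)$, while the third term and all cross-covariances are $O(1/N^2)$ and hence negligible, so $\mathrm{Var}(\hat P_{ij}) \approx \frac{B_{k\ell}(1-B_{k\ell})}{M}\left(\frac{1}{n_k}+\frac{1}{n_\ell}\right)$. Substituting $n_k/N\to\rho_k$, $n_\ell/N\to\rho_\ell$ (which holds along the realized membership sequence by the law of large numbers, as assumed) and $P_{ij}=B_{k\ell}$ gives both displayed formulas, provided the bias is shown negligible so that $\mathrm{MSE}\approx\mathrm{Var}$.

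I expect the main obstacle to be making the ``$\approx$'' rigorous: controlling the second-order remainder of the expansion and showing its contribution to the mean (bias) and to the variance is $o(1/N)$. This rests on the spectral gap, since $P$ has rank $d$ with leading eigenvalues of order $N$ while $\|E\|$ concentrates at order $\sqrt{N/M}$, so $\|E\|/\lambda_d(P) = O(1/\sqrt{NM})\to 0$ and the perturbation series is controlled; the secondary technical points are verifying that the two-stage diagonal augmentation renders $E$ approximately mean-zero on the diagonal without inflating the leading-order variance, and that the final thresholding in Step~7 is inactive with high probability because $\hat P_{ij}$ concentrates at $P_{ij}\in(0,1)$. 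As a consistency check, the resulting scaled relative efficiency $N\cdot\mathrm{RE} \to 1/\rho_{\tau_i}+1/\rho_{\tau_j}$ is finite and independent of $M$, exactly the behaviour anticipated in the surrounding text.
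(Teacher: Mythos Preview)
Your proposal is correct and takes a genuinely different route from the paper. The paper does not perturb the eigenprojector directly; instead it invokes the central limit theorem of \citet{athreya2016limit} for the adjacency spectral embedding (extended to $M$ graphs by scaling the covariance by $1/M$), so that each $\hat X_i$ is asymptotically $\mathcal{N}(\nu_{\tau_i},\Sigma(\nu_{\tau_i})/(MN))$ and asymptotically independent across $i$. It then writes $\hat P_{ij}=\hat X_i^{\top}\hat X_j$ and applies the formula of \citet{brown1977means} for the variance of an inner product of independent Gaussians, together with the algebraic simplification $\nu_s^{\top}\Sigma(\nu_t)\nu_s=\rho_s^{-1}\nu_s^{\top}\nu_t(1-\nu_s^{\top}\nu_t)$ (proved as a separate lemma), to collapse everything to $(1/\rho_{\tau_i}+1/\rho_{\tau_j})P_{ij}(1-P_{ij})/(MN)$.

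Your argument is more elementary and arguably more transparent in the SBM case: recognizing $\Pi$ as the block-averaging projector immediately explains why the leading variance is a sum of two block means and why $1/\rho_{\tau_i}+1/\rho_{\tau_j}$ appears, without passing through $\Sigma(\cdot)$ or the Brown formula. The trade-off is extensibility: the paper's route, being built on the general RDPG CLT, yields the $O(1/N)$ relative-efficiency statement for arbitrary RDPGs with no extra work (this is exactly the proposition following Theorem~\ref{thm:ARE}), whereas your block-averaging identity $\Pi=Z(Z^{\top}Z)^{-1}Z^{\top}$ is special to the SBM. Two small points to tidy: your displayed formula assumes $k\ne\ell$, but the case $\tau_i=\tau_j$ only adds $O(1/N^2)$ corrections (the shared entry $E_{ij}$ and the diagonal terms) and gives the same leading order; and the second-order control you flag as the main obstacle is precisely what the paper outsources to \citet{athreya2016limit}, so in either approach that is where the real analytic work sits.
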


\begin{theorem}
\label{thm:ARE}
In the same setting as in Lemma~\ref{lm:VarPhat}, for any $i \ne j$, if $\mathrm{rank}(B)=K=d$, then for large enough $N$:
\begin{equation}
	    \mathrm{RE}(\bar{A}_{ij}, \mathrm{lowrank}_d(\replaced{\bar{A}}{P})_{ij}) \approx
    \frac{1/\rho_{\tau_i} + 1/\rho_{\tau_j}}{N},
\label{eq:approx_re}
\end{equation}
and the asymptotic relative efficiency (ARE) is
\[
    \mathrm{ARE}(\bar{A}_{ij}, \mathrm{lowrank}_d(\replaced{\bar{A}}{P})_{ij}) = \lim_{N \to \infty} \mathrm{RE}(\bar{A}_{ij}, \mathrm{lowrank}_d(\replaced{\bar{A}}{P})_{ij}) = 0.
    \label{eq:sbm_are}
\]
\end{theorem}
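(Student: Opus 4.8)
The plan is to obtain Theorem~\ref{thm:ARE} as an essentially immediate consequence of Lemma~\ref{lm:VarPhat}, since the relative efficiency is by definition the ratio of the two mean squared errors and both factors have already been computed. First I would record the exact mean squared error of the element-wise sample mean: because $\bar{A}_{ij}$ is the average of $M$ independent $\mathrm{Bernoulli}(P_{ij})$ variables, it is unbiased and $\mathrm{MSE}(\bar{A}_{ij}) = P_{ij}(1-P_{ij})/M$ with no approximation required. Against this I would place the leading-order expression for $\mathrm{MSE}(\hat{P}_{ij})$ supplied by Lemma~\ref{lm:VarPhat}, namely $(1/\rho_{\tau_i} + 1/\rho_{\tau_j})\,P_{ij}(1-P_{ij})/(MN)$, valid for large enough $N$.

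The main computation is then the quotient. Forming
\[
\mathrm{RE}(\bar{A}_{ij}, \hat{P}_{ij}) = \frac{\mathrm{MSE}(\hat{P}_{ij})}{\mathrm{MSE}(\bar{A}_{ij})} \approx \frac{(1/\rho_{\tau_i} + 1/\rho_{\tau_j})\, P_{ij}(1-P_{ij})/(MN)}{P_{ij}(1-P_{ij})/M},
\]
the common factor $P_{ij}(1-P_{ij})/M$ cancels, leaving $(1/\rho_{\tau_i} + 1/\rho_{\tau_j})/N$, which is exactly the approximate relative efficiency claimed in~\eqref{eq:approx_re}. For this cancellation to be legitimate I would note that the formulas are non-degenerate precisely when $P_{ij} = B_{\tau_i,\tau_j} \in (0,1)$, so that $P_{ij}(1-P_{ij})$ is a strictly positive constant; in the boundary cases $P_{ij} \in \{0,1\}$ both mean squared errors vanish and the ratio is read in the limiting sense.

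For the asymptotic statement I would observe that the numerator $1/\rho_{\tau_i} + 1/\rho_{\tau_j}$ is a fixed finite constant: the indices $i$ and $j$ are held fixed, hence so are their block labels $\tau_i, \tau_j$ and the corresponding block probabilities $\rho_{\tau_i}, \rho_{\tau_j} > 0$. Letting $N \to \infty$ with $M$ fixed therefore drives the ratio to $0$, yielding $\mathrm{ARE}(\bar{A}_{ij}, \hat{P}_{ij}) = 0$. The same computation also shows that the scaled relative efficiency $N \cdot \mathrm{RE}(\bar{A}_{ij}, \hat{P}_{ij})$ converges to the nonzero finite limit $1/\rho_{\tau_i} + 1/\rho_{\tau_j}$, which is the reason this scaling was singled out.

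The one point requiring care — and the step I expect to be the genuine obstacle — is justifying that the approximation symbol in Lemma~\ref{lm:VarPhat} can be promoted to a true limit here. Concretely I must confirm that the error terms absorbed into the ``$\approx$'' are of strictly smaller order than the leading $O(M^{-1}N^{-1})$ term, so that the limit of $N \cdot \mathrm{RE}$ is genuinely $1/\rho_{\tau_i} + 1/\rho_{\tau_j}$ and hence $\mathrm{RE} \to 0$. This is not re-derived in the theorem; it is inherited from the variance analysis of the adjacency spectral embedding in \citet{athreya2016limit} underlying Lemma~\ref{lm:VarPhat}, together with the law-of-large-numbers control on the empirical block proportions $|\{i : \tau_i = k\}|/N \to \rho_k$. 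All of the substantive probabilistic work thus lives in the lemma, and the theorem is a bookkeeping consequence of it (see Section~\ref{section:outline_proof}).
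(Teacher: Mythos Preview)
Your proposal is correct and follows essentially the same route as the paper's proof: compute $\mathrm{MSE}(\bar{A}_{ij}) = P_{ij}(1-P_{ij})/M$ exactly, invoke Lemma~\ref{lm:VarPhat} for $\mathrm{MSE}(\hat{P}_{ij})$, divide, cancel the common factor, and let $N\to\infty$. The paper's argument is slightly terser and does not pause over the degenerate case $P_{ij}\in\{0,1\}$ or the promotion of ``$\approx$'' to a limit, but the substance is identical.
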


Note these theorems are stated for $\mathrm{lowrank}_d(\replaced{\bar{A}}{P})$ rather than $\hat{P}$.
This allows us to employ theoretical developments which apply specifically to the rank-$d$ approximation of $\bar{A}$ \cite{athreya2016limit}.
While we assume that $\mathrm{rank}(P)$ is known for this theory, asymptotically accurate methods to estimate $d$ in the SBM and RDPG settings have been established \cite{Fishkind2012}.
In the large $N$ setting, the other elements of Algorithm~\ref{algo:basic}, notably the diagonal augmentations and thresholding, result in negligible deviations of from $\mathrm{lowrank}_d(\replaced{\bar{A}}{P})$ and hence do not impact the asymptotic theory.
Nonetheless, these components do improve finite sample performance, see Appendix~\ref{app:compare_param}.

Note that $\rho_{\tau_i}$ represents the probability that a vertex is assigned to the same block as vertex $i$, i.e.\ $\tau_i$-th block.
The proofs of these results are provided in Appendix~\ref{app:outline_proof}. 


This theorem indicates that under the SBM, $\hat{P}$ is a much better estimate of the mean of the collection of graphs $P$ than $\bar{A}$, especially when $N$ is large.
Note that a relative efficiency less than 1 indicates that $\hat{P}$ should be preferred over $\bar{A}$, so under the above assumptions, as $N\to\infty$, $\hat{P}$ performs far better than $\bar{A}$.
Note that even though the RE could be greater than $1$ for some $N$, eventually the RE will go to 0 as $N$ increases.
The result shows that the relative efficiency is of order $O(N^{-1})$ and $N \cdot \mathrm{RE}(\bar{A}_{ij}, \hat{P}_{ij})$ (denoted as scaled RE) converges to $1/\rho_{\tau_i}+1/\rho_{\tau_j}$ when $N\to\infty$.

An important aspect of Theorem~\ref{thm:ARE} is that the ARE does not depend on the number of graphs $M$, so the larger the graphs are, the better $\hat{P}$ is relative to $\bar{A}$, regardless of $M$.
For smaller values $N$, the impact of $M$ may be more substantial.
In the SBM and RDPG cases, 
$\hat{P}$ should still offer substantial improvements over $\bar{A}$.
The theory in this case is more difficult to analyze as the impacts of low-rank approximations in the small $N$ setting are less well understood (see Section~\ref{sec:low_rank}).

The asymptotic results are for a number of vertices going to infinity with a fixed number of graphs, a setting which will be very useful in future connectomics analysis, as the collection of larger and larger brain networks grow from small sample sizes.
For example, \cite{Calabrese2015} recently reported a high resolution magnetic resonance microscopy based estimate of the mouse brain using a single mouse.

The approximate formula Eq.~\ref{eq:approx_re} indicates that the sizes of the blocks can greatly impact the relative efficiency.
As an example, consider a 2-block SBM \added{with large but fixed number of nodes $N$}.
If each of the blocks contain half the vertices, then for each pair of vertices, the relative efficiency is approximately $4/N$.
If the first block gets larger, with $\rho_1\to 1$, then the RE for estimating $P_{ij}$ with $\tau_i=\tau_j=1$ will tend to its minimum of $2/N$.
On the other hand as $\rho_1\to 1$, if $\tau_i=1$ and $\tau_j=2$, then, since $\rho_2=1-\rho_1$, the relative efficiency for estimating such an edge pair will be approximately $1$ and the same will hold if $\tau_i=\tau_j=2$.
Note that the maximum value for the relative efficiency of two vertices from different blocks in a two-block model is achieved when $\rho_1=1/N$ and $\rho_2=(N-1)/N$ in which case the relative efficiency is $N/(N-1) \approx 1$.
(Values of $\rho_s$ below $1/N$ correspond to graphs where no vertices are typically in that block, so the effective minimum that can be considered for $\rho_s$ is $1/N$.)
Note, $N \cdot \mathrm{RE}(\bar{A}_{ij}, \hat{P}_{ij})$ achieves its minimum for $i$ and $j$ from different blocks when $\rho_k = 1/K$ for all $k$.

Finite sample simulations illustrating these results are in Appendix~\ref{app:sbm_sim}.

\subsection{RDPG}\label{sec:rdpg_theory}

If instead of assuming that the graphs follow an SBM distribution, we assume that the graphs are distributed according to an RDPG distribution, similar gains in relative efficiency can be realized.
While there is no compact analytical formula for the relative efficiency of $\hat{P}$ versus $\bar{A}$ in the general RDPG case, using the same ideas as in Theorem~\ref{thm:ARE}, we can show that $\mathrm{RE}(\bar{A}_{ij},\hat{P}_{ij}) = O(1/N)$.

\begin{proposition}
Suppose that $A^{(1)},A^{(2)},\dotsc,A^{(M)}$ are iid from an RDPG distribution with common latent positions $X_1,\dotsc,X_n$, which are drawn iid from a fixed distribution.
As the number of vertices $N\to\infty$, it holds for any $i\neq j$ that $\mathrm{RE}(\bar{A}_{ij},\replaced{\mathrm{lowrank}_d(\bar{A})}{\hat{P}}_{ij}) = O(1/N)$,
where again the asymptotic relative efficiency in $N$ does not depend on $M$.
\end{proposition}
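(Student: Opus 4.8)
The plan is to mirror the argument behind Lemma~\ref{lm:VarPhat} and Theorem~\ref{thm:ARE}, replacing the explicit block structure with the general adjacency spectral embedding (ASE) central limit theorem for random dot product graphs from \citet{athreya2016limit}. The denominator is immediate and unchanged: since $\bar A_{ij}$ is the mean of $M$ independent $\mathrm{Bernoulli}(P_{ij})$ variables with $P_{ij}=X_i^{\top}X_j$, we have $\mathrm{MSE}(\bar A_{ij})=P_{ij}(1-P_{ij})/M$, which is $\Theta(1)$ in $N$. Hence the entire content is to establish that $\mathrm{MSE}(\hat P_{ij})=O(M^{-1}N^{-1})$; the ratio is then $O(1/N)$, and the factors of $M$ cancel, so the asymptotic relative efficiency does not depend on $M$.

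For the numerator I would proceed in three steps. First, recall that $\hat P_{ij}=\hat X_i^{\top}\hat X_j$, where $\hat X$ is the ASE of the (diagonally augmented) sample mean $\bar A$; crucially, this inner product is invariant under the orthogonal nonidentifiability transformation $W_N$ inherent to the RDPG, so the rotational ambiguity of the ASE plays no role here. Second, invoke the ASE CLT: for iid latent positions $X_i\sim F$ with second-moment matrix $\Delta=\Ex[X_1 X_1^{\top}]$, one has for fixed $i$ that $\sqrt{N}\,(W_N\hat X_i - X_i)$ is asymptotically mean-zero Gaussian with covariance $\Sigma(X_i)/M$, where $\Sigma(x)=\Delta^{-1}\,\Ex[(x^{\top}X_1)(1-x^{\top}X_1)\,X_1 X_1^{\top}]\,\Delta^{-1}$, the extra $1/M$ reflecting that averaging $M$ graphs scales the per-edge Bernoulli variance by $1/M$; moreover the limiting fluctuations for distinct indices are independent. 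Writing $\hat X_i = W_N^{\top}(X_i + N^{-1/2}Z_i)$ with $Z_i$ asymptotically $\mathcal{N}(0,\Sigma(X_i)/M)$ and expanding the invariant product gives
\[
\hat P_{ij}-P_{ij}
= N^{-1/2}\bigl(X_i^{\top}Z_j + X_j^{\top}Z_i\bigr) + N^{-1}Z_i^{\top}Z_j .
\]
The leading term has variance $\tfrac{1}{MN}\bigl(X_i^{\top}\Sigma(X_j)X_i + X_j^{\top}\Sigma(X_i)X_j\bigr)$ by the asymptotic independence of $Z_i$ and $Z_j$, while the $N^{-1}Z_i^{\top}Z_j$ term contributes only lower-order bias and variance. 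Thus $\mathrm{Var}(\hat P_{ij})=O(M^{-1}N^{-1})$, which is precisely the general-$F$ analogue of Lemma~\ref{lm:VarPhat}; specializing $F$ to the discrete SBM distribution recovers the $(1/\rho_{\tau_i}+1/\rho_{\tau_j})$ form.

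The hard part will be converting the distributional convergence supplied by the CLT into convergence of the \emph{second moment} defining the MSE: a pointwise CLT alone does not control $\Ex[N(\hat P_{ij}-P_{ij})^2]$, and one needs uniform integrability of $\{N(\hat P_{ij}-P_{ij})^2\}_N$. I would obtain this from the stronger row-wise (two-to-infinity-norm) perturbation bounds underlying the ASE CLT, which deliver a uniform $O(M^{-1}N^{-1})$ bound on a high-probability event, combined with the final thresholding step of Algorithm~\ref{algo:basic} that forces $|\hat P_{ij}-P_{ij}|\le 1$; pairing the high-probability quadratic bound with the trivial bound on the complementary event yields a genuine $O(M^{-1}N^{-1})$ moment bound. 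A secondary item to verify is that the diagonal-augmentation and truncation steps in Algorithm~\ref{algo:basic} do not disturb the leading order: augmenting the $O(N)$ diagonal entries is a negligible perturbation of a rank-$d$ matrix, and projection onto $[0,1]$ can only decrease the error since $P_{ij}\in[0,1]$. Granting these, $\mathrm{RE}(\bar A_{ij},\hat P_{ij})=O(1/N)$ follows, and since numerator and denominator carry the same factor $1/M$, the asymptotic relative efficiency is independent of $M$, exactly as in Theorem~\ref{thm:ARE}.
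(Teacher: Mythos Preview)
Your proposal is correct and follows essentially the same route as the paper, which in fact omits the proof entirely, stating only that it ``closely follows the proofs of Lemma~\ref{lm:VarPhat} and Theorem~\ref{thm:ARE}'': you invoke the ASE central limit theorem of \citet{athreya2016limit} with the $1/M$ variance scaling, use asymptotic independence of $\hat X_i$ and $\hat X_j$, and compute the variance of the inner product (the paper cites \citet{brown1977means} for this where you expand directly, but the content is the same). If anything, you are more careful than the paper about the uniform-integrability step needed to pass from distributional convergence to a second-moment bound, a point the paper's $\approx$ notation treats informally.
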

The proof of this proposition closely follows the proofs of Lemma~\ref{lm:VarPhat} and Theorem~\ref{thm:ARE}.

\subsection{Generalizations}\label{sec:low_rank}

When low-rank assumptions hold for the mean graph $P$, our theory and subsequent simulations show that $\hat{P}$ will significantly outperform $\bar{A}$.
However, if $P$ is full-rank or nearly full-rank, then we do not have such guarantees.

If the graphs are distributed according to an SBM or an RDPG, the relative efficiency is approximately invariant to the number of graphs $M$ when $N$ is large.
If on the other hand, the graphs are generated according to a full-rank independent edge model, then the relative efficiency can change more dramatically as $M$ changes.
For larger $M$, more of the eigenvectors of $\bar{A}$ will  concentrate around the eigenvectors of the mean graph.
This leads to the fact that the optimal embedding dimension for estimating the mean will increase, making $\bar{A}$ and the optimal low-rank approximation more similar.
As a result, $\mathrm{RE}(\bar{A},\hat{P})$ will increase as $M$ increases for full-rank models, with $\mathrm{RE}(\bar{A},\hat{P})$ possibly $\geq 1$ since it is not guaranteed that $\hat{P}$ will choose the optimal dimension.

As $M\to \infty$ with $N$ fixed, the optimal embedding dimension will itself tend to $N$.
If the dimension selection method also tends to $N$, such as for the USVT method, then for $M$ large, $\hat{P}$ and $\bar{A}$ will coincide.
Note, this relies on also including negative eigenvalues in the estimate.
In the general case, for fixed $M$ and $N$, whether $\hat{P}$ or $\bar{A}$ has better performance is a difficult theoretical question beyond the scope of this manuscript.

Another challenge in some network contexts is sparsity of the graph as $N$ grows.
Many of the theoretical underpinnings of our proofs have been extended to this setting \cite{Tang2016-rs}, which enable the extension of our results to the sparse setting for average degrees as small as $\theta(\log^4 N)$.

\section{Human Connectomes}\label{sec:connectome}

In practice, observed graphs do not follow the independent edge model, let alone an RDPG or SBM, but the mean of a population of graphs is still of interest.
To demonstrate that the estimator $\hat{P}$ is  useful in such cases, its performance on structural connectomic data is tested.
The graphs are based on diffusion tensor MR images of the SWU4 dataset collected and available at the Consortium for Reliability and Reproducibility \cite{zuo2014open} (see Appendix~\ref{app:data_human} for dataset details).
The dataset contains 454  brain scans, each of which was processed to yield an undirected, unweighted graph with no self-loops, using the pipeline described in \cite{kiar2017science, kiar2016ndmg}.
The vertices of the graphs represent  regions in the brain defined according to an atlas.
Here, three atlases were used: the JHU atlas with 48 vertices \cite{oishi2010mri}, the Desikan atlas with 70 vertices \cite{desikan2006automated}, and the  CPAC200 atlas with 200 vertices \cite{sikka2014towards}.
An edge exists between two vertices whenever there is at least one white-matter tract connecting the corresponding two regions of the brain.

\begin{figure}[!tbp]
\centering
\includegraphics[width=\linewidth]{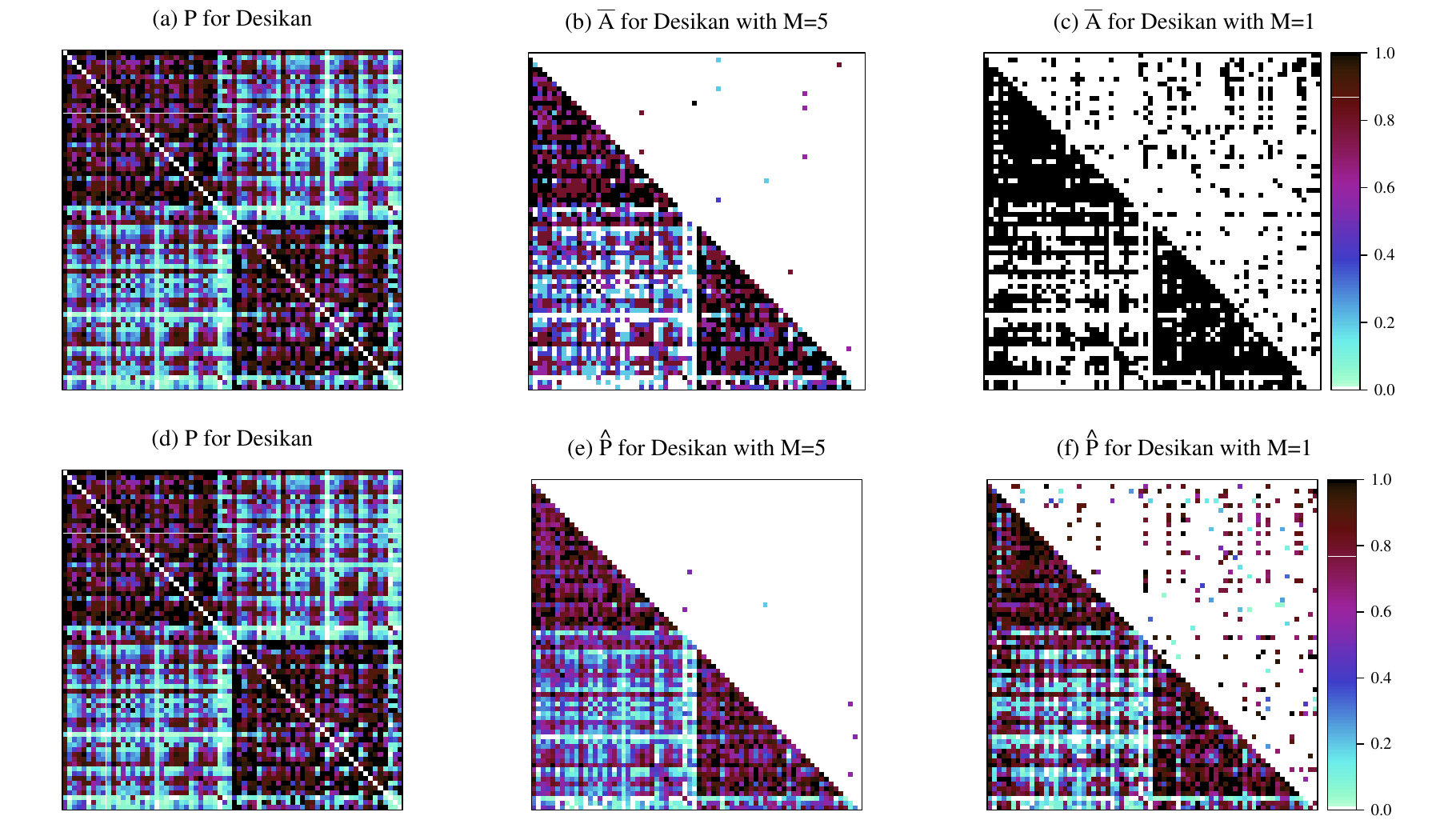}
\caption{
The heat map of the population mean for the $454$ human connectome graphs from the SWU4 dataset and the Desikan atlas are depicted in Panel (a) (also in Panel (d)).
Darker pixels indicate a higher proportion of graphs having an edge between the given vertices.
The two heat maps in the second column indicate the sample mean for 5 sampled graphs (Panel (b)), and the low-rank estimate $\hat{P}$ for the same 5 graphs with rank $d=11$ (Panel (e)), computed using Algorithm~\ref{algo:basic}.
The mean squared errors (MSE) for this sample are  $MSE(\hat{P})=0.015$ as compared to $MSE(\bar{A})=0.016$, a 3\% relative improvement.
To highlight the improvements, the upper triangular areas of the heat maps for $\bar{A}$ and $\hat{P}$ show the 18 edges  for $\bar{A}$ and 6 edges for $\hat{P}$ which have absolute estimation error larger than $0.4$.
In the third column, two heat maps using a sample size $M = 1$  ($\hat{P}$ is calculated with a dimension $d = 12$) show a smoothing effect in the heat map of $\hat{P}$ (MSE$=0.049$), which leads to a 53\% relative improvement compared to $\bar{A}$(MSE$=0.104$). Similarly, the same absolute estimation error threshold of $0.4$ highlights 504 edges for $\bar{A}$ and 234 edges for $\hat{P}$.}
\label{fig:Matrix_desikan_m5}
\end{figure}

Our goal is to estimate the mean graph of the population $P$, defined as the entry-wise mean of all the 454 graphs. Fig.~\ref{fig:Matrix_desikan_m5} (a) shows a heat map of the population mean graph $P$.
Darker pixels indicate a higher proportion of graphs having an edge between the given vertices.
Fig.~\ref{fig:Matrix_desikan_m5}(b) depicts the entry-wise sample mean $\bar{A}$ when the sample size is $M=5$ in the SWU4 dataset example.
While $\bar{A}$ is a reasonable estimate of $P$, there are some vertex-pairs with very inaccurate estimates.
The upper triangular area of the heat map for $\bar{A}$ depicts the 18 vertex-pairs which have an absolute estimation error larger than $0.4$.
When the sample size is small, the performance of $\bar{A}$ degrades due to its high variance.
Such phenomena are most obvious when the sample size decreases from $M = 5$ to $M = 1$.
Fig.~\ref{fig:Matrix_desikan_m5}(c) shows the heat map of $\bar{A}$ based on sample size $M = 1$.
Since there is only one observed graph, $\bar{A}$ is binary and thus very bumpy.
Similarly, when the same absolute estimation error threshold is $0.4$, 504 (out of 2415) edges in the upper triangular area are highlighted.

When we use the same random sample size of $M=5$ as in Fig.~\ref{fig:Matrix_desikan_m5}, the plot of $\hat{P}$ in Panel (e) shows a finer gradient of values which results in a 3\% relative improvement in estimation of the true probability matrix, $P$.
$\bar{A}$ has mean squared error of $0.016$ and $\hat{P}$ has mean squared error of $0.015$.
The upper triangular area of the heat map for $\hat{P}$ depicts the 6 edges which have absolute estimation error larger than $0.4$, whereas 18 edges are highlighted for $\bar{A}$ based on the same threshold.

The smoothing effect is even more obvious when $M = 1$, as  in Fig.~\ref{fig:Matrix_desikan_m5}(f). $\hat{P}$ smooths the estimate, especially for edges across the two hemispheres, in the lower left and corresponding upper right block (which is not shown in the heat map).
Based on the calculations, $\hat{P}$, with mean squared error $0.049$, outperforms $\bar{A}$, with mean squared error $0.104$, a 53\% relative improvement in estimation.
Similarly, the same absolute estimation error threshold of $0.4$ highlights 234 edges for $\hat{P}$, less than $50\%$ as many as $\bar{A}$.




A cross validation on the 454 graphs serves to evaluate the performance of the two estimators.
Specifically, for each atlas, each Monte Carlo replicate consists of sampling $M$ graphs out of the 454, and computing the low-rank estimator $\hat{P}$ and the sample mean $\bar{A}$ on the $M$  graphs.
These estimates are compared to the sample mean $P$ for all $454$ adjacency matrices.


To evaluate performance, the average of the ratios of the mean squared error across all vertex pairs is computed.
1000 cross-validation simulations on each of the three atlases are run for sample sizes of $M=1, 5, 10$. For $M=1$, only the 454 distinct possibilities are considered.
To determine the rank for $\hat{P}$, we employed Zhu and Ghodsi's method \cite{zhu2006automatic} and USVT \cite{chatterjee2015matrix} (see Appendix~\ref{app:dim_select}).

\begin{figure}
\begin{center}
  \includegraphics[width=1\linewidth]{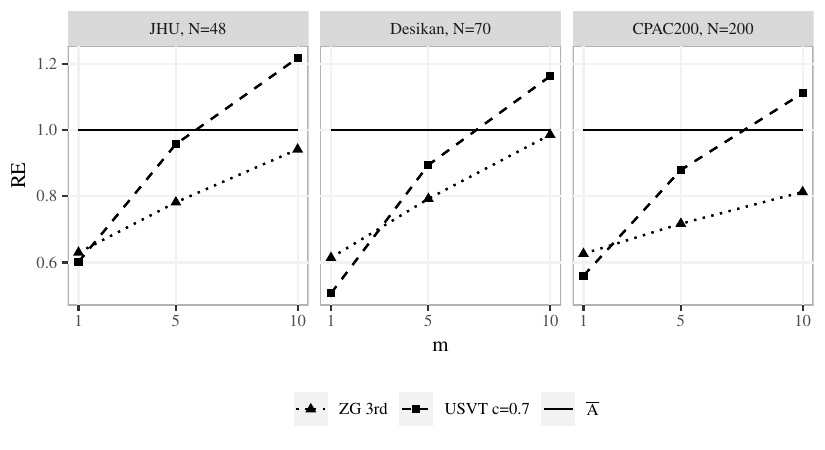}
\end{center}
\caption{
$\bar{A}$ and $\hat{P}$ were computed on samples of graphs from each atlas: JHU, Desikan, and CPAC200, with different sample sizes $M$ and different dimension selection procedures, ZG and USVT.
For each of the two methods for computing $\hat{P}$, relative efficiencies were estimated with respect to the sample mean $\bar{A}$.
Confidence intervals all had lengths less than $0.015$, and hence were omitted for clarity.
When $M=1$ or 5, $\hat{P}$ always provides substantial improvements (RE $<$ \replaced{.7}{0.85}) compared to $\bar{A}$.
For $M=10$, $\hat{P}$ \added{using USVT} has worse performance than $\bar{A}$ \deleted{for the two smaller atlases} but \added{using ZG still} improves upon $\bar{A}$\deleted{ for the CPAC200 atlas}.
}
\label{fig:corr_re}
\end{figure}

Fig.~\ref{fig:corr_re} shows the estimated relative efficiencies between $\bar{A}$ and $\hat{P}$.
For each atlas and each sample size, both dimension selection methods have similar overall performance.
Confidence intervals for the estimated relative efficiencies, calculated by assuming a normal distribution, all have lengths less than $0.015$.
\deleted{Except for the CPAC200 atlas with $M=10$ using USVT, a}\added{A}ll relative efficiencies are significantly different from 1.

The largest improvements using $\hat{P}$ occur when $M$ is small and $N$ is large, where the RE are smaller than 1.
On the other hand, once $M=10$, $\bar{A}$ tends to do nearly as well or better than $\hat{P}$\added{, except for the larger atlas using ZG}.


For the sample with size $M=5$ from Fig.~\ref{fig:Matrix_desikan_m5}, Fig.~\ref{fig:Diff_desikan_m5} shows the values for the absolute estimation error $|\bar{A} - P|$ and $|\hat{P}-P|$, as well as $|\bar{A} - \hat{P}|$.
The lower triangular sections show the  absolute differences while the upper triangular matrix highlights vertex pairs with absolute differences larger than 0.4.
There are 18 edges for $\bar{A}$ and only 6 edges for $\hat{P}$ being highlighted in the figure.
Note that approximately $13\%$ of all pairs of vertices are adjacent in all $454$ graphs and hence $\bar{A}$ will always have zero error for those pairs of vertices.
Nonetheless, $\hat{P}$ typically outperforms $\bar{A}$.

\begin{figure}
\begin{center}
  \includegraphics[height=.325\linewidth]{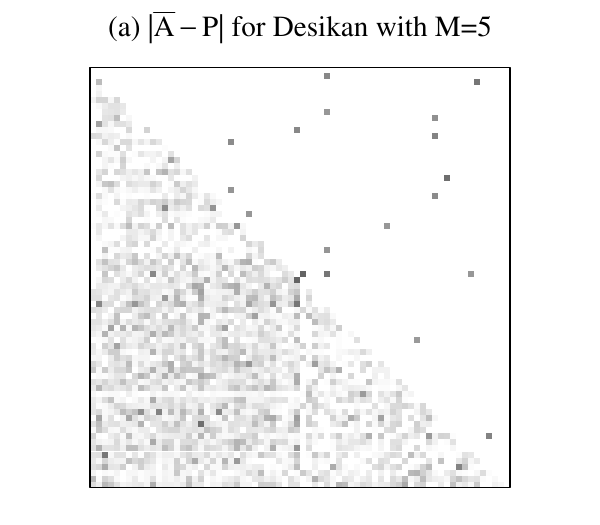}\hspace{-35pt}
  \includegraphics[height=.325\linewidth]{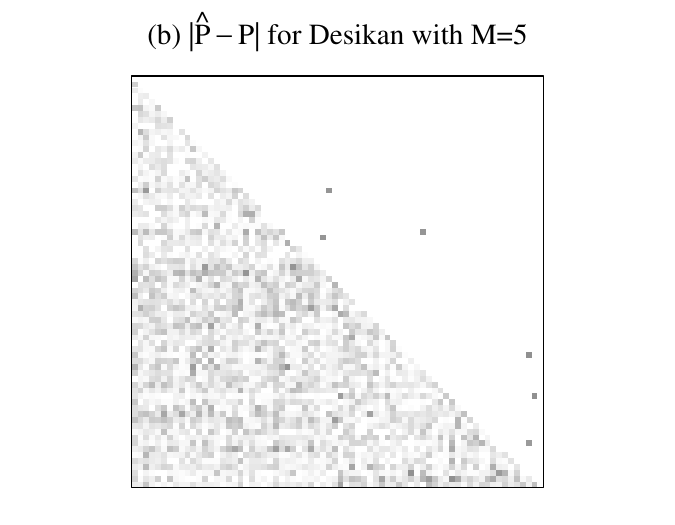}\hspace{-28pt}
  \includegraphics[height=.325\linewidth]{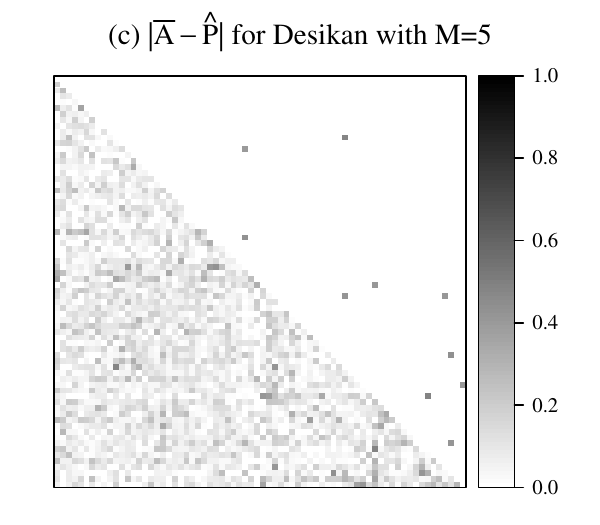}
\end{center}
\caption{
For the SWU4 data with the Desikan atlas, using a sample of $M=5$ graphs, $\bar{A}$ and $\hat{P}$ are computed (Fig.~\ref{fig:Matrix_desikan_m5} (b) and (e)).
Here, the lower triangular matrices show the absolute estimation error $|\bar{A} - P|$, $|\hat{P} - P|$ and $|\bar{A} - \hat{P}|$.
The embedding dimension for $\hat{P}$ is $d=11$ selected by the ZG method. 
The upper triangular matrix highlights the edges with absolute differences larger than $0.4$, with 18 edges from $\bar{A}$ and only 6 edges from $\hat{P}$ being highlighted.
Overall, $\hat{P}$ provides improved performance over $\bar{A}$ for this sample size.
}
\label{fig:Diff_desikan_m5}
\end{figure}


\subsection{Challenges of the SWU4 Dataset}
\label{sec:challenge}
While  $\hat{P}$ performs well when the sample size $M$ is small and the number of vertices $N$ is large, the SWU4 dataset itself does not strictly adhere to the low-rank assumptions of our theory.
Whether the dataset has strong or weak low-rank structure was investigated. In Fig.~\ref{fig:screeplot}, the relative error $\|\mathrm{lowrank}_d(P)-P\|_F^2/\|P\|_F^2$ of using a rank-$d$ approximation of $P$ (see Algorithm~\ref{algo:lowrank}) is plotted as solid curves.
The rate at which this curve tends to zero provides an indication of the relative increase in error when using $\mathrm{lowrank}_d(\replaced{\bar{A}}{P})$ as compared to $\bar{A}$, when $M$ is large.
For all three atlases, substantial errors remain for any low-rank ($<n/2$) approximation.
This can be compared to the dashed lines which show how these error increases would behave if $P$ was truly low-rank where the ranks are selected by Zhu and Ghodsi's method, 13 for JHU, 8 for Desikan, and 37 for CPAC200.

\begin{figure}[!thbp]
\centering
\includegraphics[width=\linewidth]{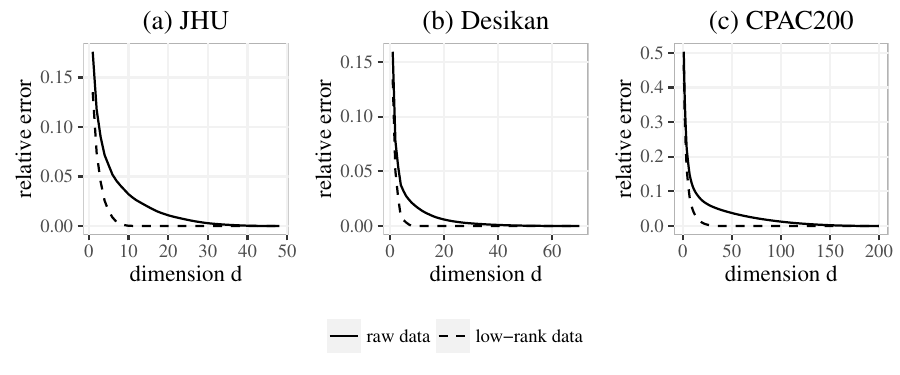}
\caption{
The solid curves show the relative error $\|\mathrm{lowrank}_d(P)-P\|_F^2/\|P\|_F^2$ of using a rank-$d$ approximation of $P$ for three different atlases.
The values of this curve indicate the increase in relative error when using $\mathrm{lowrank}_d(\bar{A})$ as compared to $\bar{A}$, when $M$ is large.
The relative error increases decay relatively slowly , indicating that $P$ is not well approximated by a low-rank matrix.
If $P$ were actually low-rank, 13, 8, and 37, respectively, the relative error is plotted with the dashed curves.}
\label{fig:screeplot}
\end{figure}

While these challenges can negatively impact the performance of low-rank procedures when estimating $P$, we can also view $\hat{P}$ as estimating latent low-rank structure in $P$.
For such an estimand, $\hat{P}$ will provide excellent performance, even for large $M$.

To illustrate this, we considered estimating the $d^*$ approximation of $P$, denoted as $P_{d^*}$, where $d^*$ is chosen according the Zhu and Ghodsi method applied to the eigenvalues of $P$.
For the SWU4 dataset with the CPAC 200 atlas, Figure~\ref{fig:latent} shows the estimated relative efficiency, on a log-scale, of $\bar{A}$ compared to $\hat{P}$, for estimating $P$, \replaced{red solid line}{left panel}, and $P_{d^*}$, \replaced{dashed blue line}{right panel}.
The relative efficiency is estimated based on 100 monte carlo replicates of sampling $M$ graphs for $M\in \{1, 2, 5, 10, 20, 50, 100\}$.
Recalling that relative efficiencies below one favor $\hat{P}$, for large $M$, $\hat{P}$ has poor performance compared to $\bar{A}$ for estimating $P$ but excellent relative efficiency for estimating $P_{d^*}$.

\begin{figure}[!htb]
    \centering
    \includegraphics[width=\linewidth]{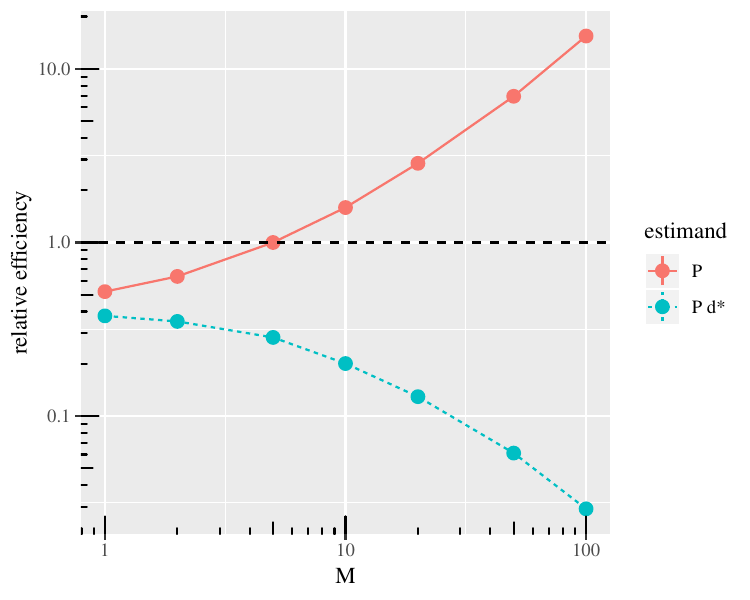}
    \caption{For the CPAC 200 atlas, the estimated relative efficiency (RE) of $\bar{A}$ compared to $\hat{P}$ on a log-scale, where values below one favor $\hat{P}$.
    The red solid line is the RE for estimating $P$ and the blue dashed line is the RE for estimating $P_{d^*}$
    The horizontal axis denotes the sample size $M$ on a log-scale.
    While $\hat{P}$ is a relatively poor estimator of $P$ for large $M$, the reverse is true for $P_{d^*}$, where $\hat{P}$ is clearly superior to $\bar{A}$, even for large $M$.}
    \label{fig:latent}
\end{figure}

This dataset has other challenges for low-rank methods.
First, there are a large number of negative eigenvalues which $\hat{P}$ will not capture.

Low-rank methods can include large negative eigenvalues, however, for low sample sizes excluding negative eigenvalues improved performance for SWU4.
See Appendix~\ref{app:compare_param} for a comparison of these and other parameters in the CPAC200 atlas.

Second, approximately 12.8\% of the entries of $P$ are exactly equal to 1.
For these edges, $\bar{A}$ will have exactly zero error, while $\hat{P}$ will be a less accurate estimate.



Despite these challenges, when the sample size is relatively small, such as $M=1$ or $M=5$, and for a larger number of vertices, $\hat{P}$ gives a better estimate than $\bar{A}$ for the SWU4 dataset.
(See Appendix~\ref{app:sim_iem} for a similar synthetic data analysis.)
Importantly, this improvement is robust to the embedding dimension as illustrated in Appendix~\ref{app:dim}.

\subsection{Eigen-connectomes and Lobe Structure}
\label{section:lobe_structure}

In addition to yielding potentially improved estimation, low-rank methods simultaneously provide convenient interpretations.

Using the mean graph $P$ for the Desikan atlas, the average of all 454 graphs, we estimated latent positions $\hat{X} \in \mathbb{R}^{N\times d}$.
Here, $N=70$ is the number of vertices and $d = 8$ is the dimension selected by the Zhu and Ghodsi's method \cite{zhu2006automatic}.
Fig.~\ref{fig:eigenvector_brain} shows the first 4 dimensions of $\hat{X}$ in the brain space. The value of $\hat{X}_{ij}$ determines the color of the $i$-th brain region for the $j$-th dimension; i.e.\ the $j$-th entry of the estimated latent vector for the $i$-th region. Red represents a positive value while blue represents a negative one, and the darker the color, the smaller the magnitude of the $\hat{X}_{ij}$.

The 1st dimension, depicted in Panel (a) of the figure, correlates strongly with degrees for each vertex. 
The 2nd dimension,  panel (b), shows  a distinction between the left and right hemisphere. Similarly, the other dimensions qualitatively correspond to different lobes.
For example, the red color corresponds to the Frontal and Temporal lobes in the 3rd dimension, while the light blue roughly matches the Occipital lobe in the 4th dimension. 

These ``eigen-connectomes'' demonstrate noteworthy similarity to the structural connectome harmonics of \cite{Atasoy2016-ip}.
While prior work has established the utility of eigen-connectomes as a basis set for modeling of functional dynamics, the current study demonstrates the correspondence of these dimensions with lobular divisions.
The connection between eigenvectors corresponding to each dimension and lobes will be explored more rigorously in Section~\ref{section:lobe_structure}.
Future work integrating the connections between structural features with the modeling of dynamics may provide insight into the spatial distribution of large-scale cortical hierarchies~\cite{Margulies2016-jj}.
Additionally, such representations enable the use of techniques from multivariate analysis to further study the mean graph.

\begin{figure}
\begin{subfigure}[t]{0.95\linewidth}
\caption{1st dimension}
\vspace{-7pt}
\centering
  \includegraphics[trim={5cm 5cm  4cm  4cm },clip,height=.3\linewidth]{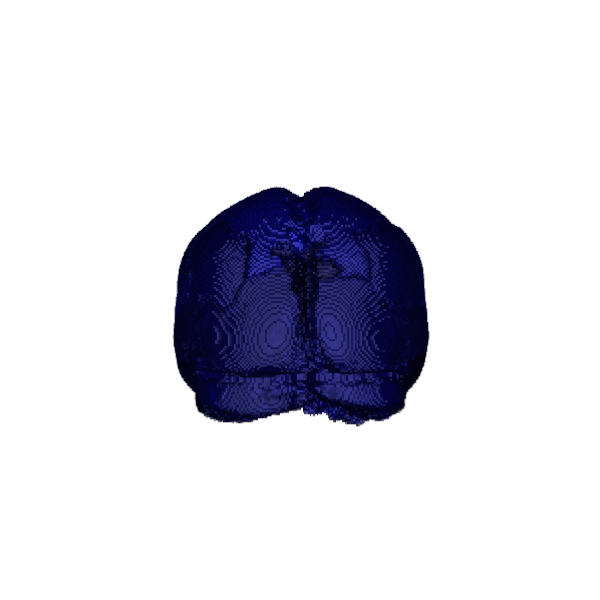}\hspace{-10pt}
  \includegraphics[trim={5cm 5cm  4cm  4cm },clip,height=.3\linewidth]{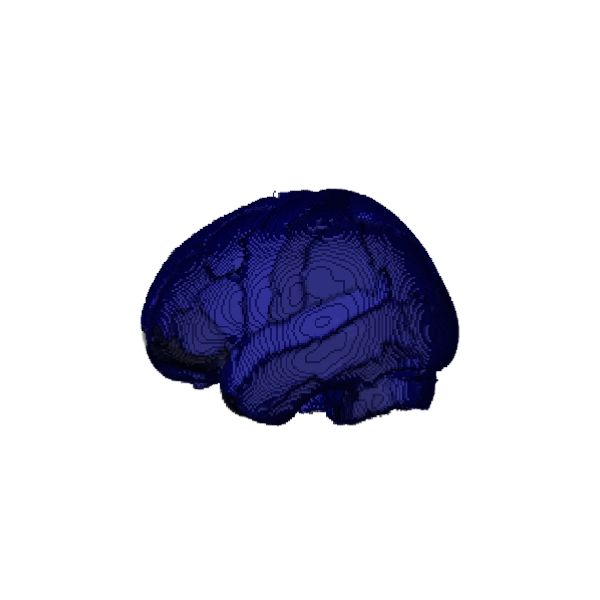}\hspace{-10pt}
  \includegraphics[trim={5cm 5cm  4cm  4cm },clip,height=.3\linewidth]{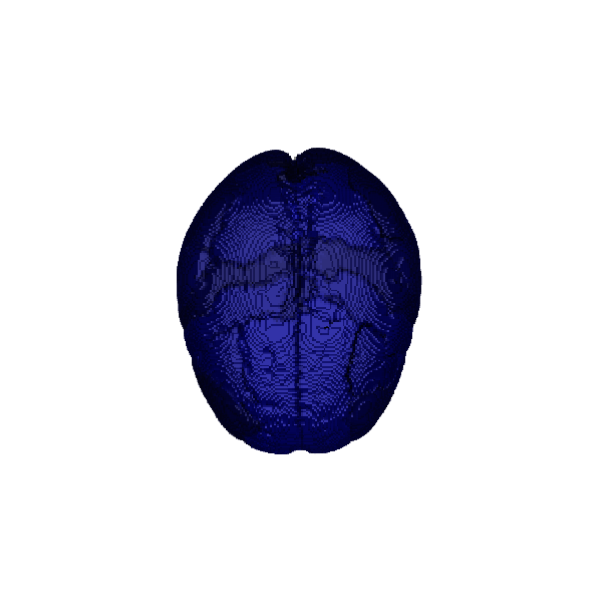}
\end{subfigure}
\begin{subfigure}[t]{0.95\linewidth}
\caption{2nd dimension}
\vspace{-7pt}
\centering
  \includegraphics[trim={5cm 5cm  4cm  4cm },clip,height=.3\linewidth]{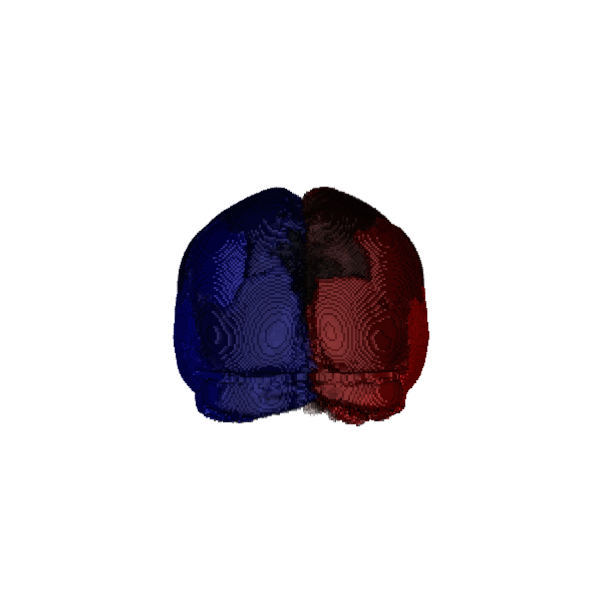}\hspace{-10pt}
  \includegraphics[trim={5cm 5cm  4cm  4cm },clip,height=.3\linewidth]{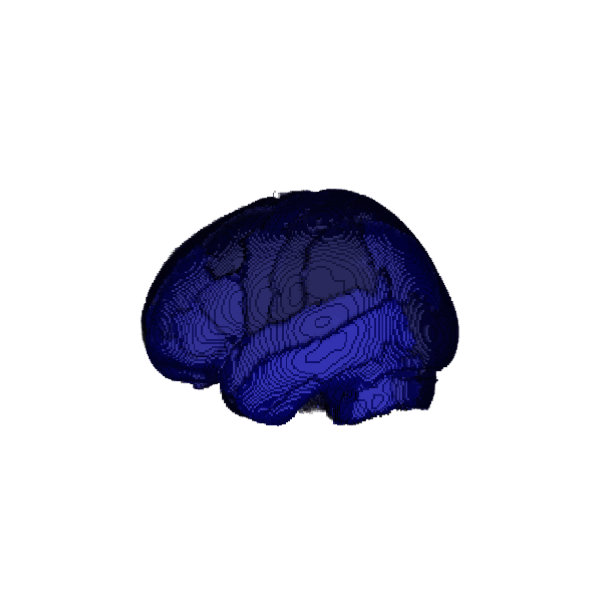}\hspace{-10pt}
  \includegraphics[trim={5cm 5cm  4cm  4cm },clip,height=.3\linewidth]{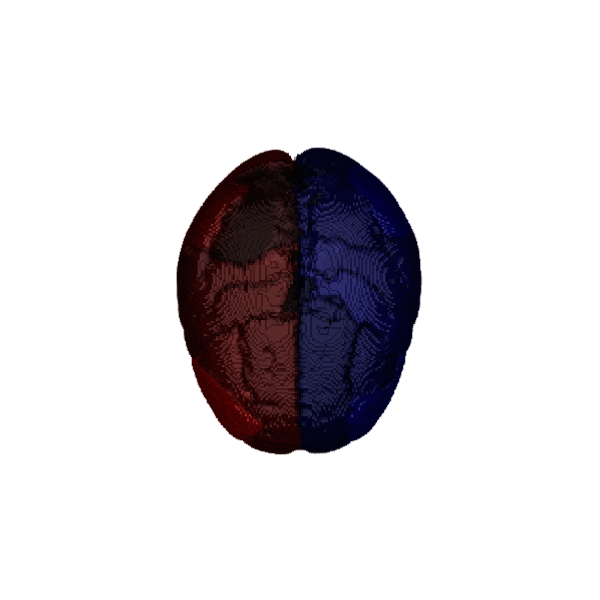}
\end{subfigure}
\begin{subfigure}[t]{0.95\linewidth}
\caption{3rd dimension}
\vspace{-7pt}
\centering
  \includegraphics[trim={5cm 5cm  4cm  4cm },clip,height=.3\linewidth]{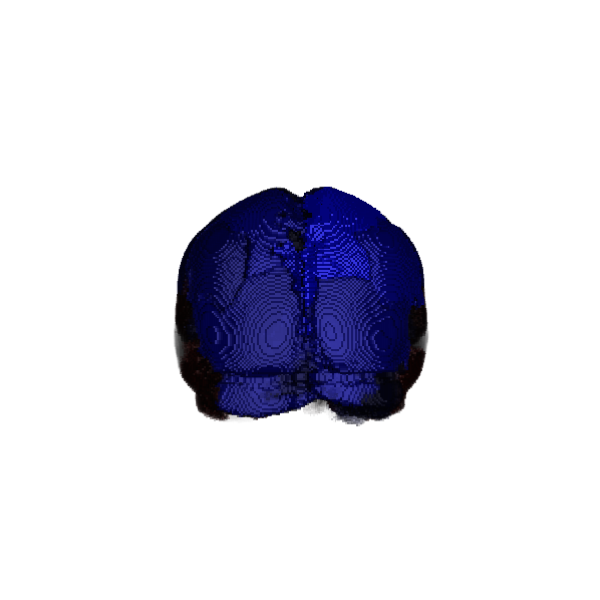}\hspace{-10pt}
  \includegraphics[trim={5cm 5cm  4cm  4cm },clip,height=.3\linewidth]{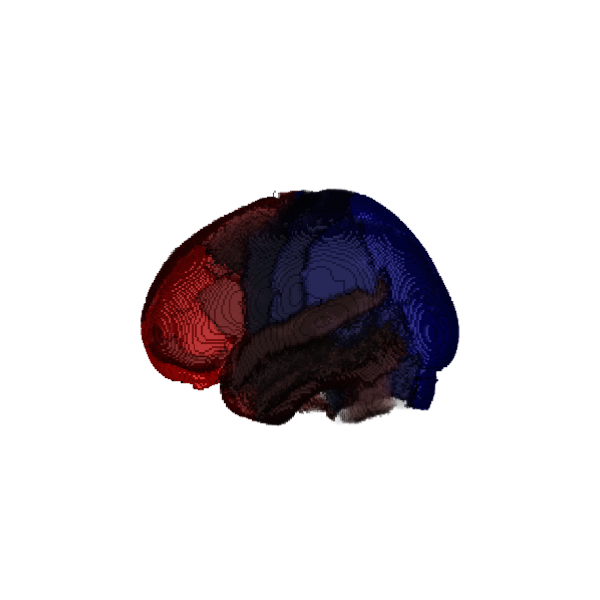}\hspace{-10pt}
  \includegraphics[trim={5cm 5cm  4cm  4cm },clip,height=.3\linewidth]{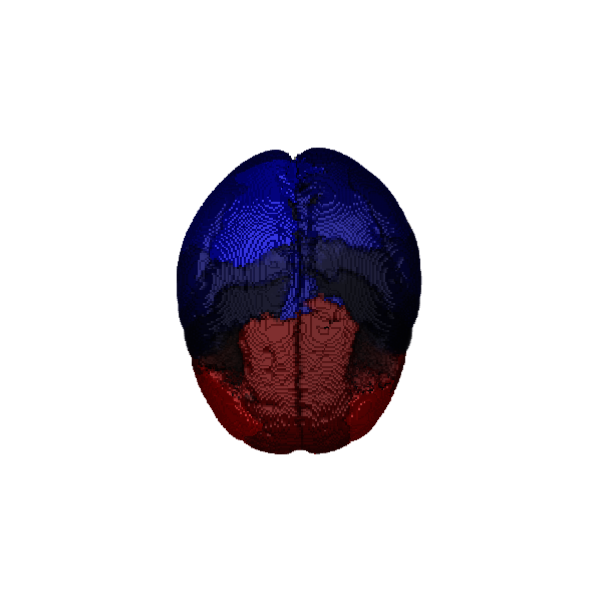}
\end{subfigure}
\begin{subfigure}[t]{0.95\linewidth}
\caption{4th dimension}
\vspace{-7pt}
\centering
  \includegraphics[trim={5cm 5cm  4cm  4cm },clip,height=.3\linewidth]{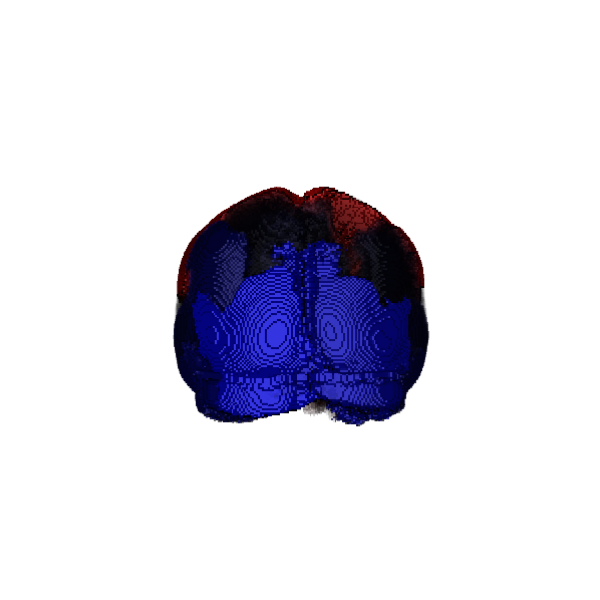}\hspace{-10pt}
  \includegraphics[trim={5cm 5cm  4cm  4cm },clip,height=.3\linewidth]{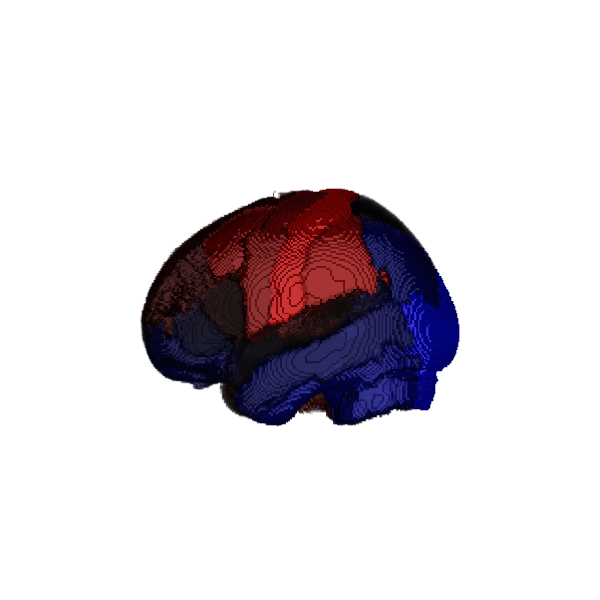}\hspace{-10pt}
  \includegraphics[trim={5cm 5cm  4cm  4cm },clip,height=.3\linewidth]{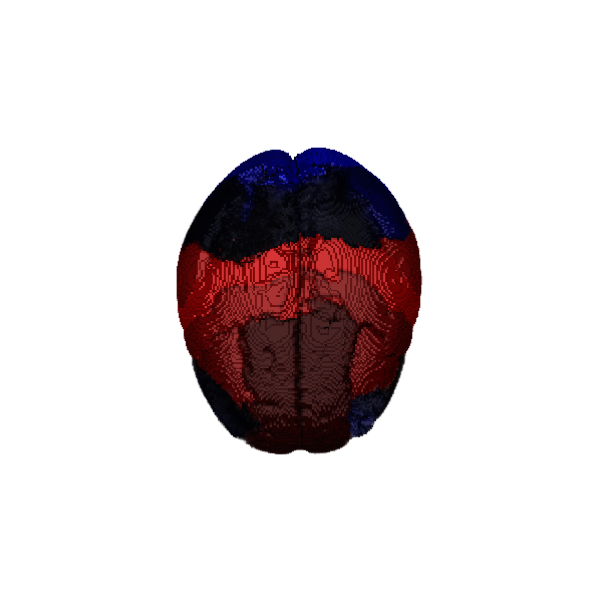}
\end{subfigure}
\caption{
Each panel depicts the values of the latent positions for the regions in the Desikan atlas for the mean graph of the SWU4 dataset.
The color of the $i$-th brain region for the $j$-th dimension is determined by the value of $\hat{X}_{ij}$, i.e.\ the $j$-th element of the estimated latent vector for the $i$-th region. Red represents a positive value while blue represents the negative one, with brighter color indicating larger magnitudes.
The 1st dimension, depicted in Panel (a), is relatively flat across the entire brain. 
In Panel (b), there is a distinction of the left and right hemispheres as conveyed in the 2nd dimension. Similarly, the other dimensions appear to correspond closely with the anatomical lobe structures of the brain.}
\label{fig:eigenvector_brain}
\end{figure}


As an example, the brain can be divided into lobes, originally based purely on anatomical considerations, now widely recognized to also play a functional role~\cite{Vanderah2015}.
While different anatomists partition brain regions differently, there is general agreement on four cortical lobes per hemisphere: frontal, parietal, occipital, and temporal \cite{fischl2012freesurfer,desikan2006automated,salat2004thinning}.

For the Desikan atlas, there are 70  regions (35 regions for each hemisphere), with each region belonging to a single lobe (see Appendix~\ref{app:data_human}). 

One might hypothesize that properties of regions within a lobe are more similar than across lobes, as regions within lobes would be expected to share more functional roles.
To test whether the embedded latent positions $X$ preserve this property or not, we propose a test statistic $T$ to be the average differences between vertices within the same lobe minus the average differences between vertices across different lobes, i.e.

\begin{equation}
	T(X, l) = \sum_{i \ne j} \frac{\mathbf{1}_{l(i) = l(j)}\|X_i - X_j \|_2}{\sum_{k \ne l} \mathbf{1}_{l(k) = l(l)}} -
\frac{ \mathbf{1}_{l(i) \ne l(j)}\|X_i - X_j \|_2}{\sum_{k \ne l} \mathbf{1}_{l(k) \ne l(l)}}, \label{eq:test_stat}
\end{equation}
where $l(i)$ denotes the lobe assignment for vertex $i$.
If the latent positions $X$ and the lobe assignment $l$ are independent, then $T(X, l)$ will be close to zero.
A small test statistic $T(X, l)$ indicates that latent positions of the regions within the same lobe are closer compared to the ones across the lobes.


However, the anatomical geometry might contribute to the dependence between $X$ and $l$, with spatially proximal vertices having similar connectivity patterns.
Hence, a small test statistic $T(X, l)$ is evidence that the low-rank methods preserve the lobe structure only if we also condition on anatomy geometry:
$H_0: X$ and $l$ are conditionally independent given anatomical geometry,
$H_A: X$ and $l$ are conditionally dependent given anatomical geometry.
The test of conditional independence has less power compared to the test of unconditional independence which is performed with a random permutation of lobe labels (which yield a p-value $<10^{-6}$).

To test under the anatomical geometry conditions, the lobe assignments $l(i)$ were randomly modified so that the number of regions in each lobe remain the same and the regions within the same modified lobe are still spatially connected. 
In particular, we performed a sequence of randomized flips, where 
a flip is a swap of two pairs of vertices which preserves the number of regions in each lobe and  maintains the constraint that lobes are spatially contiguous.
The lobes are flipped a limited number of times in order to study how the number of flips impacts the test statistic.
Appendix~\ref{app:testing} discusses the flipping procedure.

1000 simulations with the test statistics  of $T(X, l')$ are performed, each with a fixed number of flips. 
The number of flips varies from 1 to 10  (Fig.~\ref{fig:violin_plot}). In the violin plot, the dashed line indicates the value of $T(X, l)$ based on the true lobe assignment. 
The p-value is less than 0.05 if the number of flips is larger than 7.
Hence, latent positions in the same lobe are more similar to each other, even after accounting for the fact that geometrically proximal regions may also have similar latent positions.

When the number of flips is small, this test has very little power, with the null distribution being only a small deviation from the original lobes.
When the number of flips gets large, eventually the contiguity of the lobes breaks down and the empirical $p$-values continue to get smaller.

\begin{figure}[!htbp]
\centering
\includegraphics[width=1\linewidth]{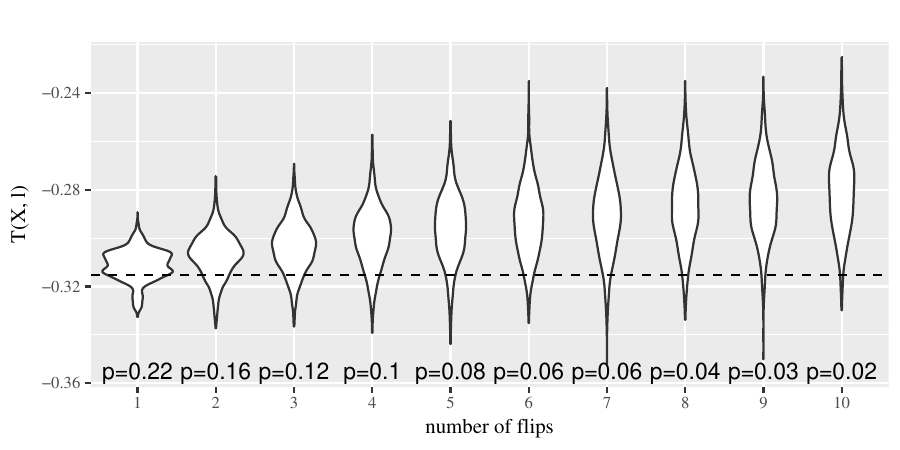}
\caption{
Violin plot for 1000 simulations of $T(X,l')$ with $l'$ randomly drawn from the set of valid flips, preserving lobe size and contiguity. The dashed line represents the statistic $T(x,l)$ for the true lobes. As the null hypotheses move further from the original lobe assignment, with the number of flips growing, the 95\% central region of the null shifts away from $T(x,l)$ under the true lobe assignment.
The $p$-value for each test is provided at the bottom of the figure.}
\label{fig:violin_plot}
\end{figure}

\section{Application to a Mouse Connectome}

As a further application of low-rank methods, an MRI-DTI mouse brain connectome \cite{Calabrese2015} with M=1 specimen was evaluated.
The data acquisition protocol is described in the appendix, Appendix~\ref{app:data_mouse}, and resulted in a $296$ node\ weighted, directed graphs with vertices again corresponding to regions in the brain. 
The  296 regions were organized into a multilevel, hierarchical structure. Analysis of the fine-grained and the first level of the hierarchy partitioned the label set into eight superstructures, with four in each hemisphere: forebrain, midbrain, hindbrain, and white matter. 

The original matrix $W\in (\Re^+)^{296\times 296}$ is a weighted adjacency matrix with $W_{ij}$ denoting the number of tracts passing through ROIs $i$ and $j$.
As the original weights had very heavy tails, these weights were transformed by setting  $A_{ij}=\log(W_{ij}+1)$.
This resulted in the weighted adjacency matrix in Figure~\ref{fig:mouse_adj_lr} (a).
Note that Algorithm~\ref{algo:basic} does not strictly require the entries to be binary and hence is applicable in this weighted setting by only thresholding the elements of $\hat{P}$ to be non-negative.

\begin{figure}[tbh!]
	\centering
	\includegraphics[width=\linewidth]{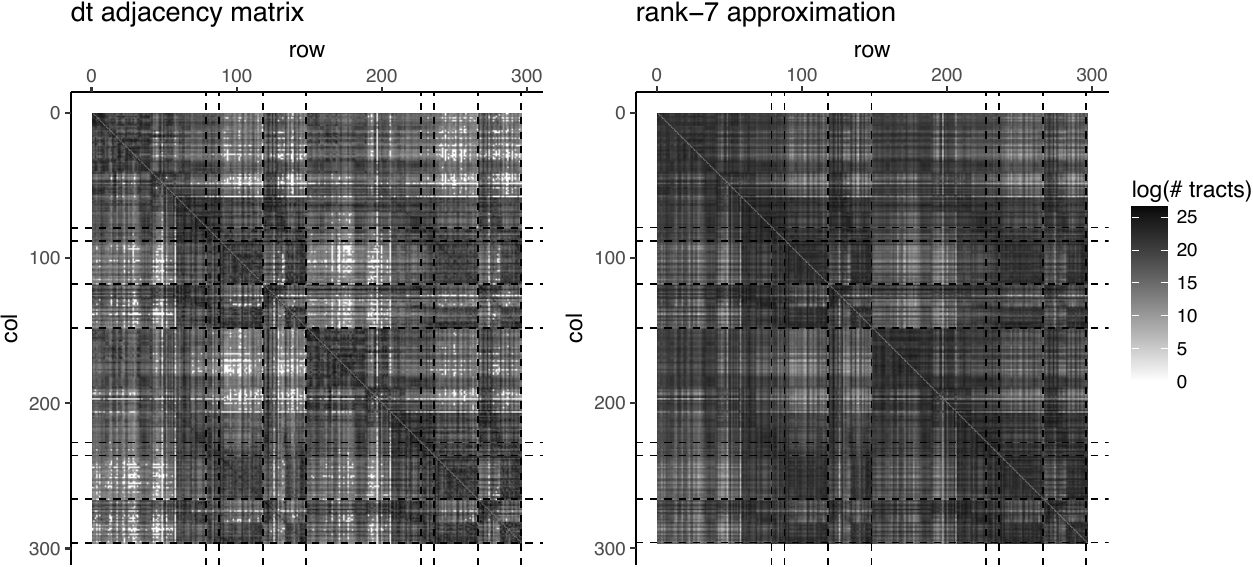}
	\caption{The left panel shows the weighted adjacency matrix with weights transformed by the transformation $w\mapsto \log(w+1)$.
	Higher weights are shown as darker pixels.
	The right panel shows the rank-7 approximation of this matrix where the rank was chosen using method in \cite{zhu2006automatic}.
	In both panels, dashed lines show the division between the eight different superstructures. }
	\label{fig:mouse_adj_lr}
\end{figure}

Using the procedure described in Algorithm~\ref{algo:basic}, the dimension selection procedure \cite{zhu2006automatic} resulted in a rank-7 approximation which is shown in the right panel of Figure~\ref{fig:mouse_adj_lr}.
Since the sample size is only one, cross-validation cannot be employed, but visually it appears that the rank-7 approximation captures many of the features in the original matrix.

As with the human data, we studied the relationship between the structure of the graph and the eight superstructures.
Panel (a) of Figure~\ref{fig:mouse_ase} shows entries grouped by superstructure of the four scaled singular vectors  of the weighted adjacency matrix $A$ corresponding to the largest singular values. 
The points are colored according to the four superstructures and the shapes are determined by the hemisphere. 
The ordering of the points groups together nodes in the same superstructure and hemisphere.
The second and fourth vectors have structure which correlates closely with the four superstructures and the two hemispheres, respectively.
Additionally, the first vector appears to separate the midbrain from the other three superstructures.

The right panel of Figure~\ref{fig:mouse_ase} shows a scatter plot of the entries of the fourth and second vectors along with the class boundaries for the eight-class quadratic discriminant analysis classifier. 
This classifier achieves a training error rate of $87/296\approx 0.29$.
The error rate is particularly high for the white matter, with $58/60$ vertices being classified incorrectly, meaning that ignoring the white matter, $29/236 \approx 0.12$ vertices were misclassified. 
Panel (c) shows the normalized confusion matrix for the eight classes, indicating that the forebrain and hindbrain classes are well separated while the white matter and midbrain have more substantial overlap.
This matches with the general structure of the white matter which is not defined at the first hierarchical level of the atlas according to spatial structure, while the fore-, mid-, and hind-brain superstructures are.

Finally, the same permutation analysis as Fig.~\ref{fig:violin_plot} was also performed for the mouse connectome as shown in Fig.~\ref{fig:mouse_violin} in Appendix~\ref{app:testing}.
This test again indicates that the eight superstructures are significant even after accounting for the spatial structure of the regions.

\begin{figure*}[tbh!]
	\centering

	\includegraphics[width=.90\linewidth]{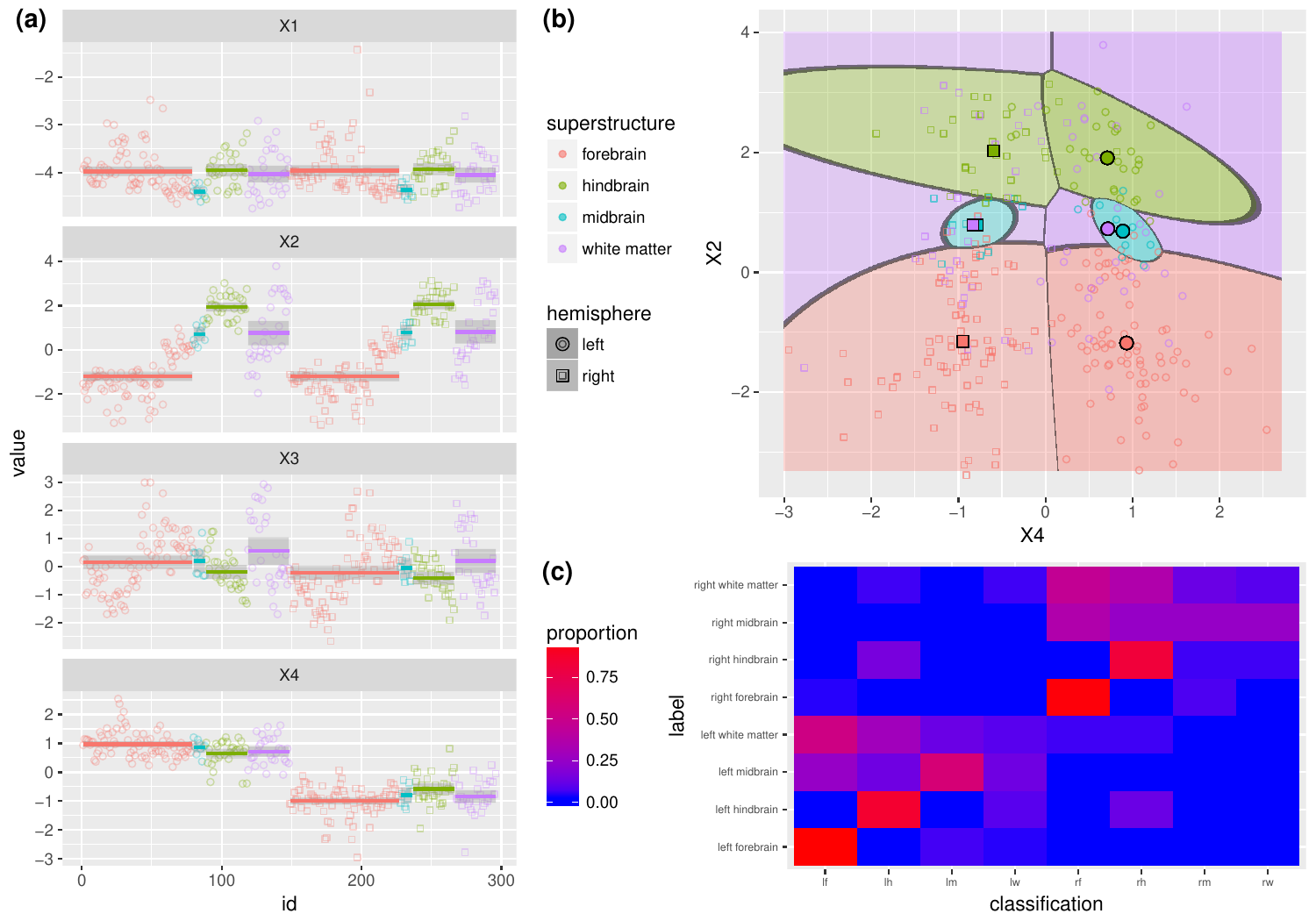}
	\caption{
	Panel (a) shows entries of the the 4 scaled singular vectors  of $A$ corresponding to largest singular values.
	The ordering of the entries is the same as the ordering in Figure~\ref{fig:mouse_adj_lr}, where vertices in the same superstructures and hemispheres are grouped together. 
	Colors of points denote the superstructures and shapes denote the hemispheres.
	Panel (b) is a scatter plot of the the fourth (horizontal axis) and second (vertical axis) singular vectors. 
	The background coloring is determined by learning a classifier, based on a mixture of Gaussians, to classify the eight different superstructures.
	Panel (c) shows the normalized confusion matrix for the mixture of Gaussians learned in panel (b).
    Each entry corresponds to the proportion of nodes with a given label that were predicted to be each other label,
    with true and predicted labels given by the row and column, respectively.}
	\label{fig:mouse_ase}
\end{figure*}


\section{Discussion}\label{sec:discussion}

Motivated by the RDPG model, via a low-rank approximation to the entry-wise MLE, our methodology takes advantage of low-rank structure of graphs.
We give a closed form for the asymptotic relative efficiency between the entry-wise MLE $\bar{A}$ and \replaced{its low-rank approximation $\mathrm{lowrank}_d(\bar{A})$}{our estimator $\hat{P}$} in the case of a SBM, showing theoretically that low-rank methods can provide substantial improvements.
\added{Our theoretical results are all shown for the large $N$ regime in terms of $\mathrm{lowrank}_d(\bar{A})$, however in practice we have observed that our proposed estimator $\hat{P}$ has even better performance, and asymptotically the difference between them will typically be small under dense low-rank models.}
Moreover, our estimator outperforms the entry-wise MLE in a cross validation analysis of the SWU4 brain graphs and in low- and full-rank simulation settings when $M$ is small.
These results illustrate that $\hat{P}$ performs well even when the low-rank assumption is violated and that $\hat{P}$ is robust and can be applied in practice.

The low-rank methods could also be applied for functional MRI studies by applying them to appropriate correlation matrices or other network estimates.
We have performed preliminary explorations in that setting where $\hat{P}$ again performs well for estimating latent structure.
Since static fMRI graphs likely have lower noise than structural connectomes, the estimator $\bar{A}$ generally performed well for estimating the population mean graph.

As we have shown in Figure~\ref{fig:latent}, $\hat{P}$ is an excellent estimate of the low-rank latent structure of $P$.
As other authors have noted~\cite{Udell2016-yj}, many random network models will enjoy strong low-rank structure in their mean graph which will be readily captured by $\hat{P}$.

For the human connectome data, the largest improvements using the low-rank method occurred when the number of graphs $M$ was small, while it provided only minor improvements, or even slightly degraded performance, when $M$ was large.
However, even in large scale studies, low-rank methods will be useful for estimating graph means for subpopulations, e.g.\ the population of females over 60 with some college education.
Using the element-wise sample mean for such small strata, which may have fewer than ten subjects, will frequently result in a degradation of performance.
Similarly, \cite{durante2016nonparametric} proposed a Bayesian nonparametric approach for modeling the population distribution of network-valued data which reduces dimensionality via a mixture model and our methods could be easily adapted to those ideas.

While the low-rank methods considered in this paper may perform well, further refinements of these methods which account for the particular traits of connectomics data would be useful to improve estimation further.
For example, we assume that the adjacency matrix is observed without contamination.
However, when heavy-tailed noise is present, robust methods may be necessary.
Rank-based methods and robust likelihood methods could be very useful in that case \cite{huber2009robust,qin2013maximum}.
\cite{Tang2017-yv} considers this problem and proposes the use of $L^q$ likelihood to improve both the low-rank and full rank methods for estimating $P$.


Another issue that arose in the analysis of the connectome dataset was the presence of structural ones in the mean graph for the population.
These structural ones appear since edges between certain regions of the brain are present in all members of the healthy population.
For these always-present edges, the low-rank methods will have non-zero error while the sample mean will always have zero error.
Detecting and incorporating structural ones and zeros could yield methods that share the best elements of both methods considered here.


While in this paper the focus is the estimation of the mean graph $P$ exploiting the low-rank structure, many future directions are quite interesting, such as fitting the SBM or clustering the vertices to detect different brain regions.
For example, \cite{abraham2013extracting} introduced a region-extraction approach based on a sparse penalty with dictionary learning; \cite{calhoun2001method} performed independent component analysis of fMRI data to draw group inferences; \cite{wang2017joint} consider a joint embedding model for feature extractions.





\bibliographystyle{IEEEtran}

\bibliography{Bib.bib}

\begin{thebibliography}{10}
\providecommand{\url}[1]{#1}
\csname url@samestyle\endcsname
\providecommand{\newblock}{\relax}
\providecommand{\bibinfo}[2]{#2}
\providecommand{\BIBentrySTDinterwordspacing}{\spaceskip=0pt\relax}
\providecommand{\BIBentryALTinterwordstretchfactor}{4}
\providecommand{\BIBentryALTinterwordspacing}{\spaceskip=\fontdimen2\font plus
\BIBentryALTinterwordstretchfactor\fontdimen3\font minus
  \fontdimen4\font\relax}
\providecommand{\BIBforeignlanguage}[2]{{%
\expandafter\ifx\csname l@#1\endcsname\relax
\typeout{** WARNING: IEEEtran.bst: No hyphenation pattern has been}%
\typeout{** loaded for the language `#1'. Using the pattern for}%
\typeout{** the default language instead.}%
\else
\language=\csname l@#1\endcsname
\fi
#2}}
\providecommand{\BIBdecl}{\relax}
\BIBdecl

\bibitem{trunk1979problem}
G.~V. Trunk, ``A problem of dimensionality: A simple example,'' \emph{Pattern
  Analysis and Machine Intelligence, IEEE Transactions on}, no.~3, pp.
  306--307, 1979.

\bibitem{ginestet2014hypothesis}
C.~E. Ginestet, P.~Balanchandran, S.~Rosenberg, and E.~D. Kolaczyk,
  ``Hypothesis testing for network data in functional neuroimaging,''
  \emph{arXiv preprint arXiv:1407.5525}, 2014.

\bibitem{Bhattacharyya2018-wj}
S.~Bhattacharyya and S.~Chatterjee, ``Spectral clustering for multiple sparse
  networks: {I},'' May 2018.

\bibitem{Udell2016-yj}
M.~Udell, C.~Horn, R.~Zadeh, and S.~Boyd, ``Generalized low rank models,''
  \emph{Foundations and Trends\textregistered{} in Machine Learning}, vol.~9,
  no.~1, pp. 1--118, 2016.

\bibitem{Eckart1936-tt}
C.~Eckart and G.~Young, ``\BIBforeignlanguage{en}{The approximation of one
  matrix by another of lower rank},''
  \emph{\BIBforeignlanguage{en}{Psychometrika}}, vol.~1, no.~3, pp. 211--218,
  Sep. 1936.

\bibitem{Berry2007-ok}
M.~W. Berry, M.~Browne, A.~N. Langville, V.~P. Pauca, and R.~J. Plemmons,
  ``Algorithms and applications for approximate nonnegative matrix
  factorization,'' \emph{Computational statistics \& data analysis}, vol.~52,
  no.~1, pp. 155--173, Sep. 2007.

\bibitem{Boutsidis2008-mx}
C.~Boutsidis and E.~Gallopoulos, ``{SVD} based initialization: A head start for
  nonnegative matrix factorization,'' \emph{Pattern recognition}, vol.~41,
  no.~4, pp. 1350--1362, Apr. 2008.

\bibitem{Lee2014-vm}
N.~H. Lee, I.-J. Wang, Y.~Park, C.~E. Priebe, and M.~Rosen, ``Automatic
  dimension selection for a non-negative factorization approach to clustering
  multiple random graphs,'' \emph{Stat}, vol. 1050, p.~24, 2014.

\bibitem{Banerjee2007-mk}
A.~Banerjee and J.~Jost, ``\BIBforeignlanguage{en}{Spectral plots and the
  representation and interpretation of biological data},''
  \emph{\BIBforeignlanguage{en}{Theory in biosciences = Theorie in den
  Biowissenschaften}}, vol. 126, no.~1, pp. 15--21, Aug. 2007.

\bibitem{Rahim2017-ci}
M.~Rahim, B.~Thirion, and G.~Varoquaux, ``{Population-Shrinkage} of covariance
  to estimate better brain functional connectivity,'' in \emph{Medical Image
  Computing and Computer Assisted Intervention − {MICCAI} 2017}.\hskip 1em
  plus 0.5em minus 0.4em\relax Springer International Publishing, 2017, pp.
  460--468.

\bibitem{Atasoy2016-ip}
S.~Atasoy, I.~Donnelly, and J.~Pearson, ``\BIBforeignlanguage{en}{Human brain
  networks function in connectome-specific harmonic waves},''
  \emph{\BIBforeignlanguage{en}{Nature communications}}, vol.~7, p. 10340, Jan.
  2016.

\bibitem{Margulies2016-jj}
D.~S. Margulies, S.~S. Ghosh, A.~Goulas, M.~Falkiewicz, J.~M. Huntenburg,
  G.~Langs, G.~Bezgin, S.~B. Eickhoff, F.~X. Castellanos, M.~Petrides,
  E.~Jefferies, and J.~Smallwood, ``\BIBforeignlanguage{en}{Situating the
  default-mode network along a principal gradient of macroscale cortical
  organization},'' \emph{\BIBforeignlanguage{en}{Proceedings of the National
  Academy of Sciences of the United States of America}}, vol. 113, no.~44, pp.
  12\,574--12\,579, Nov. 2016.

\bibitem{De_Lange2014-hd}
S.~C. de~Lange, M.~A. de~Reus, and M.~P. van~den Heuvel,
  ``\BIBforeignlanguage{en}{The laplacian spectrum of neural networks},''
  \emph{\BIBforeignlanguage{en}{Frontiers in computational neuroscience}},
  vol.~7, p. 189, Jan. 2014.

\bibitem{gray2012magnetic}
W.~Gray, J.~Bogovic, J.~Vogelstein, B.~Landman, J.~Prince, and R.~Vogelstein,
  ``Magnetic resonance connectome automated pipeline: An overview,'' \emph{IEEE
  Pulse}, vol.~2, no.~3, pp. 42--48, 2012.

\bibitem{bollobas2007phase}
B.~Bollob{\'a}s, S.~Janson, and O.~Riordan, ``The phase transition in
  inhomogeneous random graphs,'' \emph{Random Structures \& Algorithms},
  vol.~31, no.~1, pp. 3--122, 2007.

\bibitem{Gilbert1959-ba}
E.~N. Gilbert, ``Random graphs,'' \emph{Annals of Mathematical Statistics},
  vol.~30, no.~4, pp. 1141--1144, 1959.

\bibitem{Erdos1959-ln}
P.~Erd{\H o}s and A.~R{\'e}nyi, ``{On random graphs, I},'' \emph{Publicationes
  Mathematicae Debrecen}, vol.~6, pp. 290--297, 1959.

\bibitem{young2007random}
S.~J. Young and E.~R. Scheinerman, ``Random dot product graph models for social
  networks,'' in \emph{Algorithms and models for the web-graph}.\hskip 1em plus
  0.5em minus 0.4em\relax Springer, 2007, pp. 138--149.

\bibitem{nickel2007random}
C.~L.~M. Nickel, ``Random dot product graphs a model for social networks,''
  Ph.D. dissertation, Johns Hopkins University, 2008.

\bibitem{hoff2002latent}
P.~D. Hoff, A.~E. Raftery, and M.~S. Handcock, ``Latent space approaches to
  social network analysis,'' \emph{Journal of the american Statistical
  association}, vol.~97, no. 460, pp. 1090--1098, 2002.

\bibitem{holland1983stochastic}
P.~W. Holland, K.~B. Laskey, and S.~Leinhardt, ``Stochastic blockmodels: First
  steps,'' \emph{Social networks}, vol.~5, no.~2, pp. 109--137, 1983.

\bibitem{airoldi2008mixed}
E.~M. Airoldi, D.~M. Blei, S.~E. Fienberg, and E.~P. Xing, ``Mixed membership
  stochastic blockmodels,'' \emph{Journal of Machine Learning Research},
  vol.~9, no. Sep, pp. 1981--2014, 2008.

\bibitem{karrer2011stochastic}
B.~Karrer and M.~E. Newman, ``Stochastic blockmodels and community structure in
  networks,'' \emph{Physical Review E}, vol.~83, no.~1, p. 016107, 2011.

\bibitem{Lyzinski2014-az}
V.~Lyzinski, D.~L. Sussman, M.~Tang, A.~Athreya, and C.~E. Priebe, ``Perfect
  clustering for stochastic blockmodel graphs via adjacency spectral
  embedding,'' \emph{Electronic journal of statistics}, vol.~8, no.~2, pp.
  2905--2922, 2014.

\bibitem{Rubin-Delanchy2017-av}
P.~Rubin-Delanchy, C.~E. Priebe, and M.~Tang, ``Consistency of adjacency
  spectral embedding for the mixed membership stochastic blockmodel,'' May
  2017.

\bibitem{chatterjee2015matrix}
S.~Chatterjee, ``Matrix estimation by universal singular value thresholding,''
  \emph{The Annals of Statistics}, vol.~43, no.~1, pp. 177--214, 2015.

\bibitem{marchette2011vertex}
D.~Marchette, C.~Priebe, and G.~Coppersmith, ``Vertex nomination via attributed
  random dot product graphs,'' in \emph{Proceedings of the 57th ISI World
  Statistics Congress}, vol.~6, 2011, p.~16.

\bibitem{scheinerman2010modeling}
E.~R. Scheinerman and K.~Tucker, ``Modeling graphs using dot product
  representations,'' \emph{Computational Statistics}, vol.~25, no.~1, pp.
  1--16, 2010.

\bibitem{sussman2014consistent}
D.~L. Sussman, M.~Tang, and C.~E. Priebe, ``Consistent latent position
  estimation and vertex classification for random dot product graphs,''
  \emph{IEEE transactions on pattern analysis and machine intelligence},
  vol.~36, no.~1, pp. 48--57, 2014.

\bibitem{zhu2006automatic}
M.~Zhu and A.~Ghodsi, ``Automatic dimensionality selection from the scree plot
  via the use of profile likelihood,'' \emph{Computational Statistics \& Data
  Analysis}, vol.~51, no.~2, pp. 918--930, 2006.

\bibitem{athreya2016limit}
A.~Athreya, C.~E. Priebe, M.~Tang, V.~Lyzinski, D.~J. Marchette, and D.~L.
  Sussman, ``A limit theorem for scaled eigenvectors of random dot product
  graphs,'' \emph{Sankhya A}, vol.~78, no.~1, pp. 1--18, 2016.

\bibitem{Fishkind2012}
D.~E. Fishkind, D.~L. Sussman, M.~Tang, J.~T. Vogelstein, and C.~E. Priebe,
  ``{Consistent adjacency-spectral partitioning for the stochastic block model
  when the model parameters are unknown},'' 2012.

\bibitem{Calabrese2015}
E.~Calabrese, A.~Badea, G.~Cofer, Y.~Qi, and G.~A. Johnson,
  ``\BIBforeignlanguage{en}{A diffusion {MRI} tractography connectome of the
  mouse brain and comparison with neuronal tracer data},''
  \emph{\BIBforeignlanguage{en}{Cereb. Cortex}}, vol.~25, no.~11, pp.
  4628--4637, Nov. 2015.

\bibitem{Tang2016-rs}
M.~Tang and C.~E. Priebe, ``Limit theorems for eigenvectors of the normalized
  laplacian for random graphs,'' \emph{arXiv preprint arXiv:1607. 08601}, 2016.

\bibitem{zuo2014open}
X.-N. Zuo, J.~S. Anderson, P.~Bellec, R.~M. Birn, B.~B. Biswal, J.~Blautzik,
  J.~C. Breitner, R.~L. Buckner, V.~D. Calhoun, F.~X. Castellanos
  \emph{et~al.}, ``An open science resource for establishing reliability and
  reproducibility in functional connectomics,'' \emph{Scientific data}, vol.~1,
  2014.

\bibitem{kiar2017science}
G.~Kiar, K.~J. Gorgolewski, D.~Kleissas, W.~G. Roncal, B.~Litt, B.~Wandell,
  R.~A. Poldrack, M.~Wiener, R.~J. Vogelstein, R.~Burns \emph{et~al.},
  ``Science in the cloud (sic): A use case in mri connectomics,'' \emph{Giga
  Science}, vol.~6, no.~5, pp. 1--10, 2017.

\bibitem{kiar2016ndmg}
G.~Kiar, W.~Gray~Roncal, D.~Mhembere, E.~Bridgeford, R.~Burns, and J.~T.
  Vogelstein, ``ndmg: Neurodata's mri graphs pipeline,'' \url{http://m2g.io},
  Aug. 2016.

\bibitem{oishi2010mri}
K.~Oishi, A.~V. Faria, P.~C. van Zijl, and S.~Mori, \emph{MRI atlas of human
  white matter}.\hskip 1em plus 0.5em minus 0.4em\relax Academic Press, 2010.

\bibitem{desikan2006automated}
R.~S. Desikan, F.~S{\'e}gonne, B.~Fischl, B.~T. Quinn, B.~C. Dickerson,
  D.~Blacker, R.~L. Buckner, A.~M. Dale, R.~P. Maguire, B.~T. Hyman
  \emph{et~al.}, ``An automated labeling system for subdividing the human
  cerebral cortex on mri scans into gyral based regions of interest,''
  \emph{Neuroimage}, vol.~31, no.~3, pp. 968--980, 2006.

\bibitem{sikka2014towards}
S.~Sikka, B.~Cheung, R.~Khanuja, S.~Ghosh, C.~Yan, Q.~Li, J.~Vogelstein,
  R.~Burns, S.~Colcombe, C.~Craddock \emph{et~al.}, ``Towards automated
  analysis of connectomes: The configurable pipeline for the analysis of
  connectomes (c-pac),'' in \emph{5th INCF Congress of Neuroinformatics,
  Munich, Germany}, vol.~10, 2014.

\bibitem{Vanderah2015}
T.~Vanderah and D.~J. Gould, \emph{\BIBforeignlanguage{en}{Nolte's The Human
  Brain: An Introduction to its Functional Anatomy, 7e}}, 7th~ed.\hskip 1em
  plus 0.5em minus 0.4em\relax Elsevier, 1~Jun. 2015.

\bibitem{fischl2012freesurfer}
B.~Fischl, ``Freesurfer,'' \emph{Neuroimage}, vol.~62, no.~2, pp. 774--781,
  2012.

\bibitem{salat2004thinning}
D.~H. Salat, R.~L. Buckner, A.~Z. Snyder, D.~N. Greve, R.~S. Desikan, E.~Busa,
  J.~C. Morris, A.~M. Dale, and B.~Fischl, ``Thinning of the cerebral cortex in
  aging,'' \emph{Cerebral cortex}, vol.~14, no.~7, pp. 721--730, 2004.

\bibitem{durante2016nonparametric}
D.~Durante, D.~B. Dunson, and J.~T. Vogelstein, ``Nonparametric bayes modeling
  of populations of networks,'' \emph{Journal of the American Statistical
  Association}, no. just-accepted, 2016.

\bibitem{huber2009robust}
P.~J. Huber and E.~M. Ronchetti, \emph{Robust statistics}, 2nd~ed., ser. Wiley
  Series in Probability and Statistics.\hskip 1em plus 0.5em minus 0.4em\relax
  John Wiley \& Sons, Inc., Hoboken, NJ, 2009.

\bibitem{qin2013maximum}
Y.~Qin and C.~E. Priebe, ``Maximum {$L_q$}-likelihood estimation via the
  expectation-maximization algorithm: A robust estimation of mixture models,''
  \emph{Journal of the American Statistical Association}, vol. 108, no. 503,
  pp. 914--928, 2013.

\bibitem{Tang2017-yv}
R.~Tang, M.~Tang, J.~T. Vogelstein, and C.~E. Priebe, ``Robust estimation from
  multiple graphs under gross error contamination,'' Jul. 2017.

\bibitem{abraham2013extracting}
A.~Abraham, E.~Dohmatob, B.~Thirion, D.~Samaras, and G.~Varoquaux, ``Extracting
  brain regions from rest fmri with total-variation constrained dictionary
  learning,'' in \emph{MICCAI-16th International Conference on Medical Image
  Computing and Computer Assisted Intervention-2013}.\hskip 1em plus 0.5em
  minus 0.4em\relax Springer, 2013.

\bibitem{calhoun2001method}
V.~D. Calhoun, T.~Adali, G.~D. Pearlson, and J.~Pekar, ``A method for making
  group inferences from functional mri data using independent component
  analysis,'' \emph{Human brain mapping}, vol.~14, no.~3, pp. 140--151, 2001.

\bibitem{wang2017joint}
S.~Wang, J.~T. Vogelstein, and C.~E. Priebe, ``Joint embedding of graphs,''
  \emph{arXiv preprint arXiv:1703.03862}, 2017.

\end{thebibliography}

\clearpage
\pagenumbering{arabic}
\setcounter{page}{1}
\appendix

\section{Methods}
\label{app:method}

\subsection{Choosing Dimension}
\label{app:dim_select}
Often in dimensionality reduction techniques, the choice for dimension $d$, relies on analyzing the set of the ordered eigenvalues, looking for a ``gap'' or ``elbow'' in the scree-plot. \citeapp{zhu2006automatic} present an automated method for finding this gap in the scree-plot that takes only the ordered eigenvalues as an input and uses Gaussian mixture modeling to find these gaps.
The mixture modeling results in multiple candidate dimensions or elbows, and our analysis indicated that underestimating the dimension is much more harmful than overestimating the dimension.
For this reason, the 3rd elbow was employed in the experiments performed for this work. While \citeapp{zhu2006automatic} only defines the 1st elbow, we define the $s$-th elbow as in Algorithm~\ref{algo:ZG}.

\begin{algorithm}[H]
\caption{Algorithm to compute the Zhu and Ghodsi's elbow}
\label{algo:ZG}
\begin{algorithmic}[1]
\REQUIRE The number of Zhu and Ghodsi's elbow $s$, with eigenvalues $\lambda_1, \dots, \lambda_N$
\ENSURE The $s$-th Zhu and Ghodsi's elbow
\STATE Calculate the 1st elbow $d_1$ based on $\lambda_1, \dots, \lambda_N$ according to \citeapp{zhu2006automatic}
\FOR{$i$ = 2 to s}
	\STATE {Calculate the $i$-th elbow $d_i$ based on $\lambda_{d_{i - 1} + 1}, \dots, \lambda_N$ according to \citeapp{zhu2006automatic}}
\ENDFOR
\end{algorithmic}
\end{algorithm}

Universal Singular Value Thresholding (USVT) is a simple estimation procedure proposed in \citeapp{chatterjee2015matrix} that can work for any matrix that has ``a little bit of structure''.
In the current setting, it selects the dimension $d$ as the number of singular values that are greater than a constant $c$ times $\sqrt{N/M}$.
The specific constant $c$ must be selected carefully based on the mean and variance of the entries, and since  overestimating the dimension was not overly harmful, we chose a relatively small value of $c=0.7$.

Overall, selecting the appropriate dimension is a challenging task and numerous methods could be applied successfully depending on the setting.
On the other hand, in our setting, many dimensions will yield nearly optimal mean squared errors and the two methods did not pick drastically different dimensions.
Thus efforts to ensure the selected dimension is in the appropriate range are more important than finding the best dimension.

\subsubsection{Exploration of Dimension Selection Procedures}\label{app:dim}

To further investigate the impact of the dimension selection procedures, we also considered all possible dimensions for $\hat{P}$ by ranging $d$ from 1 to $N$.
$\hat{\mathrm{MSE}}$ of $\bar{A}$ and $\hat{P}$ was plotted in Fig.~\ref{fig:realdata}.
The horizontal axis gives dimension $d$, which only impacts $\hat{P}$, which is why estimated MSE of $\bar{A}$ is shown as flat.
When $d$ is small, $\hat{P}$ underestimates the dimension and throws away important information, which leads to relatively poor performance. When $d=N$, $\hat{P}$ is equal to $\bar{A}$, so that the curve for $\hat{\mathrm{MSE}}$ for $\hat{P}$ ends at $\hat{\mathrm{MSE}}(\bar{A})$.
In the figure, a triangle denotes the 3rd elbow found by the Zhu and Ghodsi method, and a square denotes the dimension selected by USVT with threshold 0.7.
Both dimension selection algorithms tend to select dimensions which nearly minimize the mean squared error.

\begin{figure}[!htbp]
\centering
\includegraphics[width=.99\linewidth]{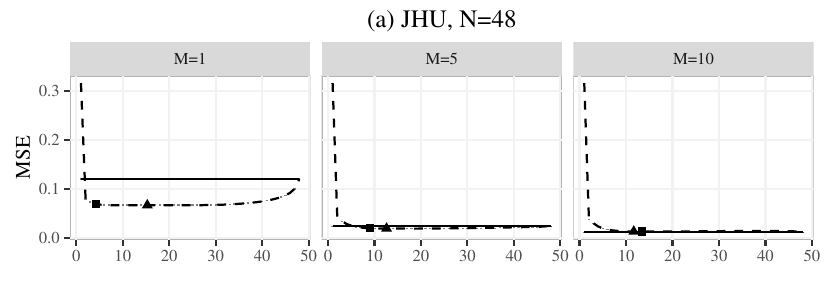}\\
\includegraphics[width=.99\linewidth]{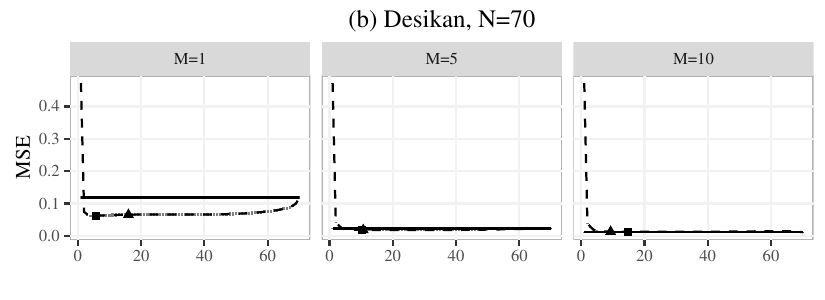}\\
\includegraphics[width=.99\linewidth]{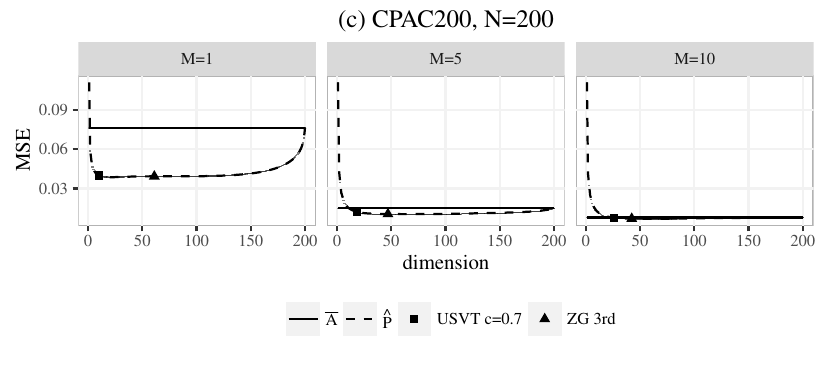}
\caption{
These plots show the mean squared error for $\bar{A}$ (solid line) and $\hat{P}$ (dashed line) for three datasets (JHU, Desikan, and CPAC200) while embedding the graphs into different dimensions and with different sample sizes $M$. The average dimensions chosen by the 3rd elbow of Zhu and Ghodsi is denoted by a triangle
 and those chosen by USVT with threshold equaling 0.7 is denoted by a square.
 Vertical intervals, visible mainly in the $N=48,70$ and $M=1$ plots, represent the 95\% confidence interval for the mean squared errors.  When $M$ is small, $\hat{P}$ outperforms $\bar{A}$ with a flexible range of the embedding dimension including the average of the dimensions selected by Zhu and Ghodsi and USVT.}
\label{fig:realdata}
\end{figure}

When $M$ is 1 or 5, $\bar{A}$ has large variance which leads to large $\hat{\mathrm{MSE}}$. Meanwhile, $\hat{P}$ reduces the variance by taking advantages of inherent low-rank structure of the mean graph. Such smoothing effect is especially obvious while there is only 1 observation. When $M = 1$, all weights of the graph are either 0 or 1, leading to a very bumpy estimate $\bar{A}$. In this case, $\hat{P}$ smooths the connectomes estimate and improves the performance.
Additionally, there is a large range of dimensions where the performance for $\hat{P}$ is superior to $\bar{A}$.
With a larger $M$, the performance of $\bar{A}$ improves so that its performance is frequently superior but nearly identical to $\hat{P}$.

\subsection{Graph Diagonal Augmentation}
\label{app:diag_aug}
The graphs examined in this work have no self-loops and thus the diagonal entries of the adjacency matrix and the mean graph are all zero.
However, when computing the low-rank approximation, these structural zeros lead to increased errors in the estimation of the mean graph.
While this problem has been investigated in the single graph setting, with multiple graphs, the problem is exacerbated since the variance of the other entries is lower, so the relative impact of the bias in the diagonal entries is higher.
Moreover, the sum of eigenvalues of the hollow matrix will be zero, leading to an indefinite matrix, which violates the positive semi-definite assumption. So it is important to remedy the situation that we do not observe the diagonal entries.

\citeapp{marchette2011vertex}  proposed the simple method of imputing the diagonals to be equal to the average of the non-diagonal entries for the corresponding row, or in equivalently the degree of the vertex divided by $n-1$.
Earlier, \citeapp{scheinerman2010modeling} proposed using an iterative method to impute the diagonal entries.
In this work, these two ideas are combined by first using the row-average method  (see Step 3 of Algorithm~\ref{algo:basic}) and then using one step of the iterative method (see Step 6 of Algorithm~\ref{algo:basic}).
Note that when computing errors, the diagonal entries are omitted since these are known to be zero.

\subsection{Evaluation of Method Choices}
\label{app:compare_param}

Each step of our algorithm is designed to improve overall performance, however for all situations these choices are not guaranteed to help.
The diagonal augmentation has negligible impact in the large-$N$ theoretical regime we've studied, but may have bigger for moderate sized $N$.
Additionally, other choices such as keeping only positive eigenvalues and the dimension selection procedure can have substantial impacts in many cases.

In this section, the claim that the choices \deleted{are} made are reasonable defaults is justified.
In particular, for the SWU4 data, we evaluated whether the particular algorithmic choices actually improve performance.
\deleted{In particular, we modified }Algorithm~\ref{algo:basic} \added{was modified} in one of four ways: 
\begin{enumerate}
\item performing the first diagonal augmentation \cite{marchette2011vertex} in step 2 or not, 
\item performing the second diagonal augmentation \cite{scheinerman2010modeling} in steps 5-6 or not, 
\item using either the Zhu and Ghodsi (ZG) procedure \cite{zhu2006automatic} or the USVT procedure \cite{chatterjee2015matrix}for dimension selection, and 
\item keeping the largest positive eigenvalues or keeping the largest eigenvalues in magnitude.
\end{enumerate}

Figure~\ref{fig:compare_param} shows the impact of each of these four binary decisions on estimating the mean graph for the SWU4 dataset for the Desikan atlas in~\ref{fig:compare_param_desikan} and for the CPAC200 atlas in~\ref{fig:compare_param_cpac200}.
\added{The two default choices are more opaque while the variations are more transparent.}
The figure demonstrates that overall, the choices that we have made \deleted{in} improve performance in these settings.

Each panel compares the impact of using either of the diagonal augmentation steps or not.
Overall, we see that the first diagonal augmentation, as indicated by color, has a small but essentially universally positive impact on relative efficiency.
It has biggest impact when the second diagonal augmentation is not performed and when $M$ is large.
The second diagonal augmentation, indicated by shape and line type, can have a a bigger impact, especially when $M$ is large.
For the CPAC200 atlas, it can happen that the second diagonal augmentation can negatively impact performance when $M$ is small.
In general however, both diagonal augmentations appear to improve performance in most cases.

The reason that diagonal augmentation has a larger impact when $M$ is large boils down to a bias-variance trade-off. 
When $M$ is large the overall variance of the entries in $\bar{A}$ is very low which means the relative impact of the bias from using zeros along the diagonal is more substantial.
Both diagonal augmentation procedures serve to reduce this bias and hence can have substantial impact when $M$ is larger.

As suggested, the impact of keeping either the largest magnitude or largest positive eigenvalues is substantial.
For most cases, our choice to keep the largest positive eigenvalues improved performance, (with the relative efficiencies generally lower for the corresponding panels).
However, when $M$ is large and USVT is used then keeping largest magnitude eigenvalues is preferable.

For the dimension selection procedure, USVT and ZG have performances which do differ for certain regimes and choices.
For the Desikan atlas, USVT appears to perform better for larger $M$ when using largest in magnitude eigenvalues.
This is likely due to the fact that USVT is using a larger dimension when $M$ is larger, which has a bigger impact when large negative eigenvalues can be included.

\begin{figure}
\begin{center}
\begin{subfigure}{.7\textwidth}
\includegraphics[width=\linewidth]{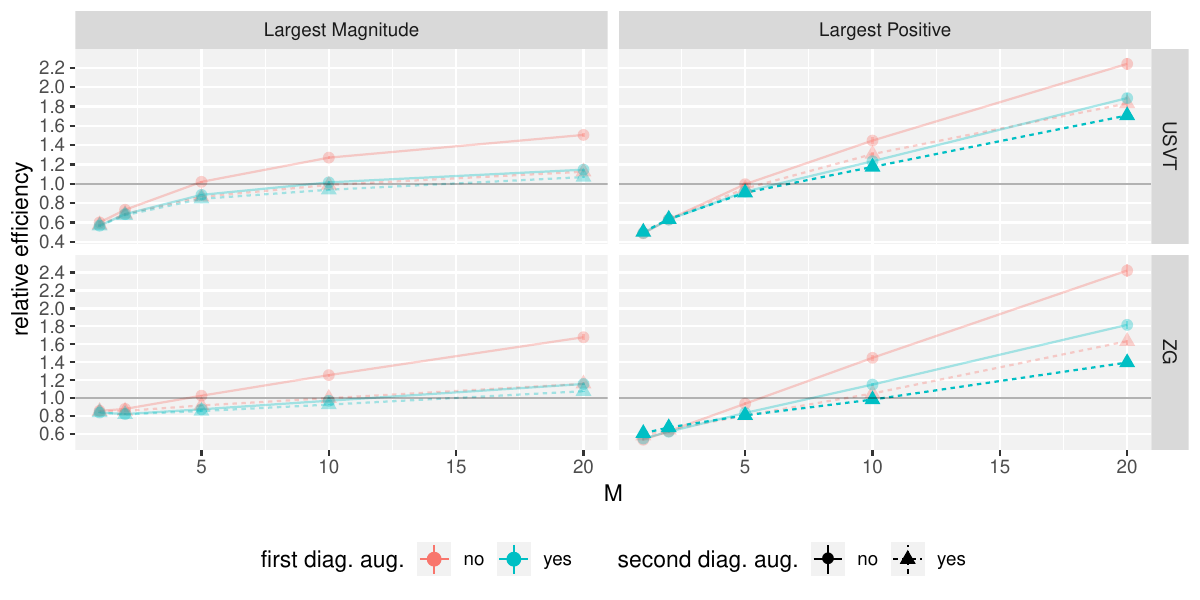}
\caption{Desikan atlas}
\label{fig:compare_param_desikan}
\end{subfigure}
\begin{subfigure}{.7\textwidth}
\includegraphics[width=\linewidth]{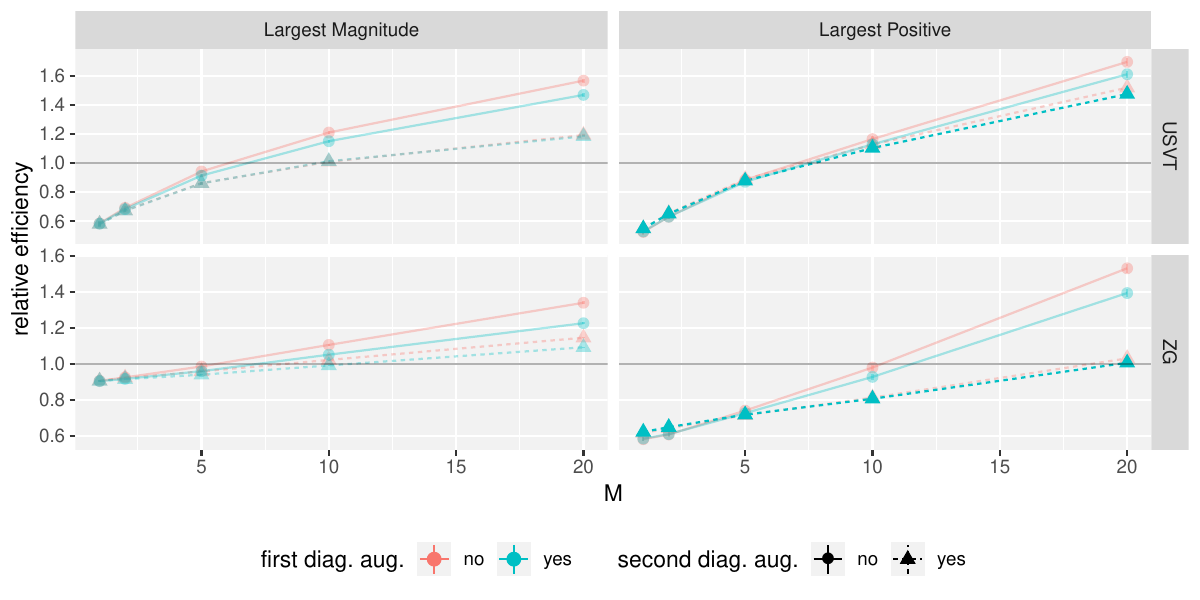}
\caption{CPAC200 atlas}
\label{fig:compare_param_cpac200}
\end{subfigure}
\end{center}
\caption{Each line depicts the relative efficiency between a version of the proposed low-rank procedure and the standard sample mean $\bar{A}$.
The top panels are for estimating the mean graph of (a) the Desikan atlas and the bottom panels for (b) the CPAC 200 atlas.
\added{The two fully opaque lines  represent the default parameters and transparent lines represent alternatives.}
The different panels indicate whether the largest eigenvalues in magnitude were kept, on the left, or the largest positive eigenvalues are kept, on the right, and whether the USVT or the ZG rank-selection procedures was used, on the top and bottom, respectively. 
The line color depicts whether the first diagonal augmentation step was used, and the linetype and shape, indicate whether the second diagonal augmentation step was used.
Note that our default setting correspond to using both diagonal augmentations, the blue dashed line with triangles, and using positive eigenvalues.
While these are not alway the best parameters we can see that they are nearly always competitive and often perform substantially better than other choices.
Further details of these experiments are provided in Appendix~\ref{app:compare_param}.}
\label{fig:compare_param}
\end{figure}

\subsection{Flipping Procedure for Permutation Test}
\label{app:testing}
Here the details of the flipping procedure are described for the permutation test mentioned in Section~\ref{section:lobe_structure}.
As mentioned before, there are 10 lobes and 70 regions based on the Desikan atlas.
We say two regions are adjacent if they share a common boundary. Such spatial adjacency is denoted by an adjacency matrix $S$ for the 70 regions, where $S_{ij} = 1$ means region $i$ and region $j$ contain a pair of voxels, $v_i$ and $v_j$, which are spatially adjacent.
If this is true, then region $j$ is defined as a neighbor of region $i$.
The lobe i.d. for region $i$ is denoted by $l_i$.

Now a uniform $1$-flip can  be defined by:
\begin{enumerate}
\item Selecting a pair of adjacent regions (region $i_1$ and region $j_1$) across the boundary of lobes uniformly, i.e. $S_{i_1 j_1} = 1$ and $l(i_1) \ne l(j_1)$;
\item Uniformly selecting another pair of adjacent regions (region $i_2$ and region $j_2$ where $i_1 \ne i_2$ and $j_1 \ne j_2$) across the same boundary of lobes uniformly, i.e. $S_{i_2 j_2} = 1$ and $l(i_1) = l(i_2)$ and $l(j_1) = l(j_2)$;
\item Reassigning region $j_1$ to lobe $l_{i_1}$ and reassign region $i_2$ to lobe $l_{j_2}$.
\end{enumerate}

By this definition, after a uniform $1$-flip, the number of regions in each lobe stays the same, where only two regions are changed to a different lobe.

Then we can define a uniform $k$-flip naturally as sequentially performing uniform $1$-flip $k$ times.
Note that after a uniform $k$-flip, the number of regions in each lobe still stays the same.

In the permutation test, a uniform $k$-flip was applied and the test statistic $T(X, l)$ was calculated based on the lobe assignment after flipping.
The $p$-value is computed as the proportion of uniform $k$-flips with a $T$ value smaller than the $T$ value for the true lobe assignments.

Figure~\ref{fig:mouse_violin} shows the violin plots for the metric defined in Eq.~\eqref{eq:test_stat} permuted in the same way as described above but applied to the mouse connectome.


\section{Dataset Description}
\label{app:data}

\subsection{Human Connectomes}
\label{app:data_human}

The original dataset is from the Emotion and Creativity One Year Retest Dataset provided by Qiu, Zhang and Wei from Southwest University available at the Consortium for Reliability and Reproducibility \citeapp{zuo2014open,gorgolewski2015high}. It is composed of 235 subjects, all of whom were college students. Each subject underwent two sessions of anatomical, resting state DTI scans, spaced one year apart. Due to incomplete data, only 454 scans are available.

When deriving MR connectomes, the NeuroData team parcellates the brain into groups of voxels as defined by anatomical atlases \citeapp{kiar2016ndmg}. The atlases are defined either physiologically by neuroanatomists (Desikan and JHU), or are generated using an automated segmentation algorithm (CPAC200).
Once the voxels in the original image space are grouped into regions, an edge is placed between two regions when there is at least one white-matter tract, derived using a tractography algorithm, connecting the corresponding two parts of the brain \citeapp{Garyfallidis2014-wg}.
The resulting graphs are undirected, unweighted, and have no self-loops.

For the Desikan atlas, there are 70 different regions (35 regions for each hemisphere), with each region belonging to a single lobe. 
Three regions of the Desikan atlas per hemisphere (Banks of Superior Temporal Sulcus, Corpus Callosum, and the ``Unknown'' region) 
do not have obvious lobe assignment and were clustered into a new lobe category named ``other'' to resolve this issue.

\subsection{Mouse Connectome}
\label{app:data_mouse}

Images of the fixed specimen were acquired on a 9.4-T small-animal-magnet using a 3D diffusion-weighted imaging sequence. 
120 unique diffusion directions were acquired using a b value of 4000 s/mm2, interleaved with 11 non-diffusion weighted scans. Images were acquired in 235 hours, and reconstructed at 43 micron resolution. The mouse brain was labeled with 296 regions of interest, 148 per hemisphere \citeapp{Anderson2017-ra}. 

To construct a structural connectome of the mouse brain fiber data was reconstructed (max 4 fiber orientations/voxel), then probabilistic tractography was performed using FSL \citeapp{behrens2007manual},
(5000 samples per voxel, 21 $\mu$m step size, 45 degrees curvature threshold). The 296 seed regions had connectivity estimates produced by counting the number of fibers that originate from one region and fall onto all other regions. This was normalized by the volume of the seed region and resulted in a 296x296 weighted, directed graph.

\section{Proofs for Theory Results}

For the proofs below we will \deleted{abuse notation and} denote $\mathrm{lowrank}_d(\replaced{\bar{A}}{P})$ as $\replaced{\tilde{P}}{\hat{P}}$.

\subsection{Outline for Main Theorems}
\label{app:outline_proof}
Here the proof of Lemma~\ref{lm:VarPhat} is outlined, which provides the approximate MSE of $\replaced{\tilde{P}}{\hat{P}}$ in the stochastic blockmodel case.
The result depends on using the asymptotic results (see Theorem \ref{thm:clt_ext}) for the distribution of eigenvectors from \citeapp{athreya2016limit} which extend to the multiple graph setting in a straightforward way.

The first key observation is that since $\bar{A}$ is computed from iid observations each with expectation $P$, $\bar{A}$ is unbiased for $P$ and $\mathrm{Var}(A_{ij}) = \frac{1}{M}P_{ij}(1-P_{ij})$.
The results of \citeapp{athreya2016limit} provide a central limit theorem for estimates of the latent position in an RDPG model for a single graph. Theorem~\ref{thm:clt_ext} describes important details.
Since the variance of each entry is scaled by $1/M$ in $\bar{A}$, the analogous result for $\bar{A}$ is that the estimated latent positions will follow an approximately normal distribution with variance scaled by $1/M$ compared to the variance for a single graph.



Since $\replaced{\tilde{P}}{\hat{P}}_{ij} = \hat{X}_i^{\top} \hat{X}_j^{\phantom{\top}}$ is a noisy version of the dot product of $\nu_s^{\top} \nu_t^{\phantom{\top}}$ from Section~\ref{section:sbm_rdpg} and each $\hat{X}_i$ is approximately independent and normal, we can use common results for the variance of the inner product of two independent multivariate normals \citeapp{brown1977means}.
After simplifications that occur in the stochastic blockmodel setting, we can derive that the variance of $\replaced{\tilde{P}}{\hat{P}}_{ij}$ converges to $\left( 1/\rho_{\tau_i} + 1/\rho_{\tau_j} \right) P_{ij} (1-P_{ij})/(N \cdot M)$ as $N \rightarrow \infty$.
Since the variance of $\bar{A}_{ij}$ is $P_{ij} (1-P_{ij})/M$, the relative efficiency between $\replaced{\tilde{P}}{\hat{P}}_{ij}$ and $\bar{A}_{ij}$ is approximately $(\rho_{\tau_i}^{-1} + \rho_{\tau_j}^{-1})/N$ when $N$ is sufficiently large.

\subsection{Proof Details}
Here the proofs are presented of the results in Section~\ref{section:theoretical_result}. 
To keep the ideas clear and concise, some details are omitted, which are only slight changes to previous works.
We assume the block memberships $\tau_i$ are drawn iid from a categorical distribution with block membership probabilities given by $\rho\in[0,1]^K$ where $\sum_i \rho_i =1$.
We will also assume that for a given $N$, the block memberships are fixed for all graphs.

We denote matrix of between-block edge probabilities by $B = \nu \nu^{\top} \in[0,1]^{K\times K}$ which we assume has rank $K$ and is positive definite.
By definition, the mean of the collection of graphs generated from this SBM is $P$, where $P_{ij} = B_{\tau_i, \tau_j}$.

We observe $M$ graphs on $N$ vertices $A^{(1)}, \cdots, A^{(M)}$ sampled independently from the SBM conditioned on $\tau$.
Define $\bar{A} = \frac{1}{M} \sum_{t=1}^M A^{(t)}$. Let $\hat{U} \hat{S} \hat{U}^{\top}$ be the best rank-$d$ positive semidefinite approximation of $\bar{A}$, then we define $\replaced{\tilde{P}}{\hat{P}} = \hat{X} \hat{X}^{\top}$, where $\hat{X} = \hat{U} \hat{S}^{1/2}$.

The proofs presented here will rely on a central limit theorem developed in \citeapp{athreya2016limit}.
The theorem was modified slightly to account for the multiple graph setting and is presented in the special case of the stochastic blockmodel.

\begin{theorem}[Corollary of Theorem 1 in \citeapp{athreya2016limit}]\label{thm:clt_ext}
  In the setting above, let $X=[X_1,\dotsc,X_N]^{\top}\in\Re^{N\times d}$ have row $i$ equal to $X_i=\nu_{\tau_i}$ (recall that $\tau_i$ are drawn from $[K]$ according to the probabilities $\rho$).
	Then there exists an orthogonal matrix $W$ such that for each row $i$ and $j$ and any $z \in \Re^{d}$, conditioned on $\tau_i=s$ and $\tau_j=t$,
  \begin{equation}
    \label{eq:4}
    \begin{split}
    &\Pr\left\{\sqrt{N}( W \hat{X}_i - \nu_s ) \leq z, \sqrt{N}( W \hat{X}_j - \nu_t) \leq z'\right\}\\
    =&  \Phi(z, \Sigma(\nu_s)/M)  \Phi(z', \Sigma(\nu_t)/M) +o(1)\end{split}
  \end{equation}
  where $\Sigma(x) =\Delta^{-1}\Ex[ X_j X_j^\top(x^\top X_j -(x^\top
  X_j)^2)]\Delta^{-1}$ and $\Delta=\Ex[ X_1 X_{1}^{T}]$ is the second
  moment matrix, with all expectations taken unconditionally.
  The function $\Phi$ is the cumulative distribution function for a multivariate normal with mean zero and the specified covariance, and $o(1)$ denotes a function that tends to zero as $N\to \infty$.
\end{theorem}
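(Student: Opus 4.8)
The plan is to obtain Theorem~\ref{thm:clt_ext} as a two-step adaptation of the single-graph central limit theorem of \citet{athreya2016limit}: first accounting for the $M$-fold averaging, and then upgrading the single-row marginal to the joint two-row statement via an asymptotic-independence argument. The engine behind the single-graph result is a first-order expansion of the embedding residual. For a single RDPG with adjacency matrix $A$, there is an orthogonal alignment $W$ such that
\[
\sqrt{N}\,(W\hat{X}_i - X_i) = \left(\tfrac{1}{N}X^\top X\right)^{-1}\frac{1}{\sqrt{N}}\sum_{k=1}^N (A_{ik} - X_i^\top X_k)\,X_k + R_i,
\]
where $\tfrac{1}{N}X^\top X \to \Delta$ by the law of large numbers, the remainder $R_i$ is controlled through Davis--Kahan-type spectral-perturbation bounds together with concentration of $\|A-P\|$, and a conditional Lindeberg--Feller CLT applied to the leading linear term yields conditional covariance $\Sigma(X_i)$.

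The first step is to replace $A$ by $\bar{A}$. Since $\bar{A}$ is symmetric and each entry $\bar{A}_{ik}$ is the average of $M$ iid $\mathrm{Bernoulli}(P_{ik})$ variables, the summands $(\bar{A}_{ik} - P_{ik})X_k$ remain independent across $k$ conditional on the latent positions, now with conditional variance scaled by $1/M$, namely $\tfrac{1}{M}P_{ik}(1-P_{ik})$. The perturbation bounds used to kill $R_i$ depend on $\|\bar{A}-P\|$, which concentrates at least as sharply as $\|A-P\|$ because averaging only reduces the variance; hence the remainder is negligible exactly as before. Applying the conditional CLT to the leading term with the $1/M$-scaled variances yields the marginal statement $\sqrt{N}(W\hat{X}_i - \nu_s)\Rightarrow \mathcal{N}(0, \Sigma(\nu_s)/M)$ on $\{\tau_i=s\}$, which is the $/M$ modification claimed.

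The second step is the joint two-row statement. I would stack the two leading linear terms $\xi_i := \Delta^{-1}\tfrac{1}{\sqrt{N}}\sum_k (\bar{A}_{ik}-P_{ik})X_k$ and $\xi_j$, for fixed distinct indices $i,j$, into a single $2d$-vector and apply a multivariate CLT via the Cram\'er--Wold device. The crux is the cross-covariance: $\xi_i$ and $\xi_j$ share noise only through the terms indexed by $k\in\{i,j\}$, and in particular through the single symmetric entry $\bar{A}_{ij}=\bar{A}_{ji}$ appearing in both sums, which contributes $O(1/N)$ and vanishes in the limit. The stacked vector is therefore jointly asymptotically normal with asymptotically block-diagonal covariance $\mathrm{diag}(\Sigma(\nu_s),\Sigma(\nu_t))/M$, so the two rows are asymptotically independent and the limiting joint law factors as the product $\Phi(z,\Sigma(\nu_s)/M)\,\Phi(z',\Sigma(\nu_t)/M)$ in Eq.~\eqref{eq:4}. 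I would note that a single global $W$ serves both rows, since $W$ aligns the whole embedding and is not chosen per-row.

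The main obstacle is the joint control of the two remainder terms: I must show not merely that each $R_i$ is marginally $o_P(1)$, but that $(R_i,R_j)$ is jointly negligible and does not reintroduce dependence through the shared spectral quantities $\hat{U}$, $\hat{S}$, and the alignment $W$ that couple the two rows. This requires a uniform-over-rows version of the perturbation bound together with the observation that the dominant coupling between rows enters only at order $1/N$, so it is washed out at the $\sqrt{N}$ scale. The $M$-scaling and the factorization of the Gaussian limit are then routine consequences.
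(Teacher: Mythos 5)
Your proposal is correct and takes essentially the same route as the paper: the paper's ``proof'' of this theorem is simply the observation that the argument of Theorem 1 (and, for the two-row factorization, Corollary 3) of \citet{athreya2016limit} goes through verbatim once the entrywise variances are rescaled by $1/M$ from averaging $M$ iid adjacency matrices and $\|\bar{A}-P\|$ is noted to concentrate at least as well as in the single-graph case --- exactly the adaptation you carry out. One minor remark: the ``main obstacle'' you identify is not actually an obstacle, since for two \emph{fixed} rows the marginal bounds $R_i = o_P(1)$ and $R_j = o_P(1)$ already give joint negligibility of $(R_i,R_j)$ by Slutsky, so no uniform-over-rows perturbation bound is needed for the stated result.
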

The proof of this result follows very closely the proof of the result in the original paper with only slight modifications for the multiple graph setting.

We now prove a technical lemma which yields the simplified form for the variance under the stochastic blockmodel.

\begin{lemma}
\label{lm:mseForm}
In the same setting as Theorem~\ref{thm:ARE}, for any $1 \le s, t \le K$:
\[
	\nu_s^{\top} \Sigma(\nu_t) \nu_s^{\phantom{\top}} = \frac{1}{\rho_s} \nu_s^{\top} \nu_t^{\phantom{\top}} (1- \nu_s^{\top} \nu_t).
\]
\end{lemma}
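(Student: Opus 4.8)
The plan is to exploit the fact that under the stochastic blockmodel the latent distribution is supported on the finitely many points $\nu_1,\dots,\nu_K$, so every expectation appearing in the definition of $\Sigma$ collapses to a finite sum weighted by the block proportions $\rho$. Writing $R=\mathrm{diag}(\rho_1,\dots,\rho_K)$ and recalling that $\nu\in\Re^{K\times d}$ has rows $\nu_1^{\top},\dots,\nu_K^{\top}$, the first step is to record
\[
\Delta=\Ex[X_1X_1^{\top}]=\sum_{k=1}^K \rho_k\,\nu_k\nu_k^{\top}=\nu^{\top}R\,\nu,
\]
and, evaluating the argument at $x=\nu_t$ and abbreviating $g(u)=u(1-u)$, to write the inner moment matrix as
\[
C:=\Ex\!\left[X_jX_j^{\top}\bigl(\nu_t^{\top}X_j-(\nu_t^{\top}X_j)^2\bigr)\right]=\sum_{k=1}^K \rho_k\,g(\nu_t^{\top}\nu_k)\,\nu_k\nu_k^{\top}.
\]
Then $\Sigma(\nu_t)=\Delta^{-1}C\Delta^{-1}$, and the target quantity becomes the double sum $\nu_s^{\top}\Sigma(\nu_t)\nu_s=\sum_k \rho_k\,g(\nu_t^{\top}\nu_k)\,(\nu_s^{\top}\Delta^{-1}\nu_k)(\nu_k^{\top}\Delta^{-1}\nu_s)$.

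The crux is a biorthogonality identity. Since $\mathrm{rank}(B)=K=d$ forces $\nu$ to be a square invertible matrix, $\Delta=\nu^{\top}R\nu$ is invertible with $\Delta^{-1}=\nu^{-1}R^{-1}\nu^{-\top}$. Writing $\nu_k=\nu^{\top}e_k$ for the standard basis vector $e_k\in\Re^K$, I would compute
\[
\nu_s^{\top}\Delta^{-1}\nu_k=e_s^{\top}\nu\,\nu^{-1}R^{-1}\nu^{-\top}\nu^{\top}e_k=e_s^{\top}R^{-1}e_k=\frac{\delta_{sk}}{\rho_s}.
\]
This one identity does all of the work: it shows $\Delta^{-1}$ acts as a dual basis against the $\nu_k$, so that $(\nu_s^{\top}\Delta^{-1}\nu_k)(\nu_k^{\top}\Delta^{-1}\nu_s)=\delta_{sk}/\rho_s^2$, which vanishes unless $k=s$.

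Substituting back, the double sum collapses to its single surviving term and yields
\[
\nu_s^{\top}\Sigma(\nu_t)\nu_s=\rho_s\,g(\nu_t^{\top}\nu_s)\cdot\frac{1}{\rho_s^2}=\frac{1}{\rho_s}\,\nu_s^{\top}\nu_t\,(1-\nu_s^{\top}\nu_t),
\]
which is exactly the claim. I expect the main obstacle to be essentially bookkeeping around this biorthogonality step: one must invoke the full-rank hypothesis $K=d$ to guarantee that $\nu$ is invertible (so $\Delta^{-1}$ exists and the dual-basis computation is legitimate) and keep the two transposes of $\nu$ straight when verifying $\nu_s^{\top}\Delta^{-1}\nu_k=\delta_{sk}/\rho_s$. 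Once that identity is established, no asymptotic or probabilistic analysis is needed---the result is a direct algebraic consequence of the discreteness of the stochastic blockmodel latent distribution.
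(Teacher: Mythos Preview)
Your proposal is correct and follows essentially the same approach as the paper: both arguments reduce to establishing the biorthogonality identity $\nu_s^{\top}\Delta^{-1}\nu_k=\delta_{sk}/\rho_s$ and then collapsing the sum over $k$. The only cosmetic difference is that the paper first writes $\nu=US$ with $U$ orthogonal and $S$ diagonal before carrying out the same cancellation, whereas you invert $\nu$ directly; your route is slightly cleaner but not substantively different.
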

\begin{proof}
Under the stochastic blockmodel with parameters $(B, \rho)$, we have $X_i \stackrel{iid}{\sim} \sum_{k=1}^K \rho_k \delta_{\nu_k}$, where $\nu = [\nu_1, \cdots, \nu_K]^{\top}$ satisfies $B = \nu \nu^{\top}$. Without loss of generality, it can be assumed that $\nu = U S$ where $U = [u_1, \cdots, u_K]^{\top}$ is orthonormal in columns and $S$ is a diagonal matrix. Here it can be concluded that $\nu_s^{\top} = u_s^{\top} S$. Defining $R = \text{diag}(\rho_1, \cdots, \rho_K)$, allows
\[
	\Delta = \Ex[X_1 X_1^{\top}] = \sum_{k=1}^K \rho_k \nu_k \nu_k^{\top} = \nu^{\top} R \nu = S U^{\top} R U S.
\]
Thus
\begin{align*}
	\nu_s^{\top} \Sigma(\nu_t) \nu_s = &
     \sum_{k=1}^K \nu_s^{\top} \Delta^{-1} \rho_k \nu_k \nu_k^{\top} \Delta^{-1} \nu_s (\nu_t^{\top} \nu_k)(1 - \nu_t^{\top} \nu_k) \\
    = & \sum_{k=1}^K \rho_k (u_s^{\top} U^{\top} R^{-1} U u_k)^2 (\nu_t^{\top} \nu_k) (1 - \nu_t^{\top} \nu_k) \\
    = & \sum_{k=1}^K \rho_k (e_s^{\top} R^{-1} e_k)^2 (\nu_t^{\top} \nu_k) (1 - \nu_t^{\top} \nu_k) \\
    = & \sum_{k=1}^K \rho_k \delta_{sk} \rho_s^{-2} (\nu_t^{\top} \nu_k) (1 - \nu_t^{\top} \nu_k) \\
    = & \frac{1}{\rho_s} \nu_t^{\top} \nu_s (1 - \nu_t^{\top} \nu_s)
\end{align*}
\end{proof}

\begin{lemma}[Lemma~\ref{lm:VarPhat}]
In the same setting as above, for any $i, j$, conditioning on $X_i = \nu_{\tau_i}$ and $X_j = \nu_{\tau_j}$:
\[
	\lim_{N \to \infty} N \cdot \mathrm{Var}(\replaced{\tilde{P}}{\hat{P}}_{ij}) =
    \frac{1/\rho_{\tau_i} + 1/\rho_{\tau_j}}{M} P_{ij} (1 - P_{ij}).
\]
And for $N$ large enough, conditioning on $X_i = \nu_{\tau_i}$ and $X_j = \nu_{\tau_j}$:
\[
	\Ex[(\replaced{\tilde{P}}{\hat{P}}_{ij} - P_{ij})^2] \approx
    \frac{1/\rho_{\tau_i} + 1/\rho_{\tau_j}}{M N} P_{ij}(1-P_{ij}).
\]
\end{lemma}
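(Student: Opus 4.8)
The plan is to reduce the computation of $\mathrm{Var}(\hat{P}_{ij})$ to the variance of an inner product of two independent multivariate normal vectors, and then to invoke the algebraic simplification of Lemma~\ref{lm:mseForm}. The key structural observation is that $\hat{P}_{ij} = \hat{X}_i^{\top} \hat{X}_j = (W\hat{X}_i)^{\top}(W\hat{X}_j)$ for any orthogonal $W$, so the quantity of interest is invariant to the orthogonal alignment appearing in Theorem~\ref{thm:clt_ext}. This lets me work directly with the aligned embeddings $W\hat{X}_i$ and $W\hat{X}_j$, for which the limiting distribution is explicit.

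First I would invoke Theorem~\ref{thm:clt_ext} to assert that, conditioned on $\tau_i = s$ and $\tau_j = t$, the pair $\big(\sqrt{N}(W\hat{X}_i - \nu_s),\, \sqrt{N}(W\hat{X}_j - \nu_t)\big)$ is asymptotically distributed as a product of independent centered Gaussians with covariances $\Sigma(\nu_s)/M$ and $\Sigma(\nu_t)/M$. Consequently, to leading order $W\hat{X}_i$ and $W\hat{X}_j$ behave like independent normal vectors with means $\nu_s, \nu_t$ and covariances $\Sigma(\nu_s)/(NM)$ and $\Sigma(\nu_t)/(NM)$, respectively.

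Next I would apply the standard formula for the variance of the inner product of two independent Gaussian vectors $Y_1 \sim N(\mu_1, \Sigma_1)$ and $Y_2 \sim N(\mu_2, \Sigma_2)$ (see \citet{brown1977means}):
\[
\mathrm{Var}(Y_1^{\top} Y_2) = \mu_1^{\top} \Sigma_2 \mu_1 + \mu_2^{\top} \Sigma_1 \mu_2 + \mathrm{tr}(\Sigma_1\Sigma_2).
\]
Substituting $\mu_1 = \nu_s$, $\mu_2 = \nu_t$, $\Sigma_1 = \Sigma(\nu_s)/(NM)$, and $\Sigma_2 = \Sigma(\nu_t)/(NM)$, the trace term is $O(N^{-2})$ and hence negligible against the first two terms, which are $O(N^{-1})$. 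Lemma~\ref{lm:mseForm} then evaluates these: since $P_{ij} = \nu_s^{\top}\nu_t$, one has $\nu_s^{\top}\Sigma(\nu_t)\nu_s = \rho_s^{-1} P_{ij}(1-P_{ij})$ and $\nu_t^{\top}\Sigma(\nu_s)\nu_t = \rho_t^{-1} P_{ij}(1-P_{ij})$. Collecting terms yields $N\cdot\mathrm{Var}(\hat{P}_{ij}) \to (\rho_s^{-1} + \rho_t^{-1})P_{ij}(1-P_{ij})/M$, the first claim. For the MSE claim I would argue that $\hat{P}_{ij}$ is asymptotically unbiased, since the leading-order mean of the inner product is $\nu_s^{\top}\nu_t = P_{ij}$; the squared bias is then of smaller order than $N^{-1}$, so the mean squared error agrees with the variance to leading order.

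The main obstacle will be that Theorem~\ref{thm:clt_ext} is a statement of convergence in distribution with an $o(1)$ remainder, whereas the conclusion concerns convergence of a rescaled second moment. Passing from weak convergence of $\sqrt{N}(W\hat{X}_i - \nu_s)$ to convergence of $N\cdot\mathrm{Var}(\hat{P}_{ij})$ requires controlling moments, for instance by establishing uniform integrability of the rescaled embedding errors or by directly bounding the tail contribution, rather than merely reading the variance off the limiting Gaussian. A related subtlety is that the independence of $W\hat{X}_i$ and $W\hat{X}_j$ holds only asymptotically, so the cross terms in the inner-product variance must be shown to vanish in the limit. These approximation-theoretic details are exactly what the phrase ``for $N$ large enough'' and the ``$\approx$'' in the statement absorb, and they follow from the finer estimates underlying \citet{athreya2016limit}.
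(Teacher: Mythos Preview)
Your proposal is correct and follows essentially the same approach as the paper: invoke the CLT of Theorem~\ref{thm:clt_ext} for the aligned embeddings, apply the Brown formula for the variance of an inner product of independent Gaussian vectors, discard the $O(N^{-2})$ trace term, and simplify the two quadratic forms via Lemma~\ref{lm:mseForm}. Your explicit remark that $\hat{P}_{ij}$ is invariant to the orthogonal alignment $W$, and your candid acknowledgment of the weak-convergence-to-moment-convergence gap, are both welcome additions that the paper leaves implicit.
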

\begin{proof}
Conditioned on $X_i = \nu_k$, we have by Theorem~\ref{thm:clt_ext},
\[
	\Ex[W \hat{X}_i] = \nu_k+o(1)
\]
and
\[
	N \cdot \mathrm{Cov}(W \hat{X}_i, W_n \hat{X}_i) = \Sigma(\nu_k)/M.
\]


Also, Corollary 3 in \citeapp{athreya2016limit} says $\hat{X}_i$ and $\hat{X}_j$ are asymptotically independent. Thus, conditioning on $X_i = \nu_s$ and $X_j = \nu_t$, we have $\lim_{N\to\infty}\Ex[\hat{X}_i^{\top} \hat{X}_j] = \lim_{N\to\infty}\Ex[(W_N \hat{X}_i)^{\top}] \Ex[W_N \hat{X}_j] = \nu_s^{\top} \nu_t = P_{ij}$.

Since $\replaced{\tilde{P}}{\hat{P}}_{ij} = \hat{X}_i^{\top} \hat{X}_j$ is a noisy version of the dot product of $\nu_s^{\top} \nu_t$, combined with Lemma~\ref{lm:mseForm} and the results above, by Equation 5 in \citeapp{brown1977means}, conditioning on $X_i = \nu_s$ and $X_j = \nu_t$:
\begin{align*}
    \Ex[\hat{X}_i^{\top} \hat{X}_j] &= \Ex[(W_N \hat{X}_i)^{\top}] \Ex[W_N \hat{X}_j] \\
    &= \nu_s^{\top} \nu_t+o(1) = P_{ij}+o(1)
\end{align*}
and
\begin{align*}
	& N \cdot \mathrm{Var} (\replaced{\tilde{P}}{\hat{P}}_{ij}) \\
    = & \frac{1}{M} \left( \nu_s^{\top} \Sigma(\nu_t) \nu_s + \nu_t^{\top} \Sigma(\nu_s) \nu_t^{\top} \right)\\
    & + \frac{1}{M^2 N} \left( tr(\Sigma(\nu_s) \Sigma(\nu_t)) \right) +o(1)\\
    = & \frac{1}{M} \left( \nu_s^{\top} \Sigma(\nu_t) \nu_s + \nu_t^{\top} \Sigma(\nu_s) \nu_t^{\top} \right)+o(1) \\
    = & \frac{1/\rho_s + 1/\rho_t}{M} P_{ij}(1-P_{ij}) + o(1).
\end{align*}
Since $\replaced{\tilde{P}}{\hat{P}}_{ij} = \hat{X}_i^{\top} \hat{X}_j$ is asymptotically unbiased for $P_{ij}$, when $n$ is large enough:
\[
    \Ex[(\replaced{\tilde{P}}{\hat{P}}_{ij} - P_{ij})^2] = \mathrm{Var}(\replaced{\tilde{P}}{\hat{P}}_{ij}) \approx
    \frac{1/\rho_s + 1/\rho_t}{M N} P_{ij}(1-P_{ij})+o(1).
\]
\end{proof}

We now prove Theorem~\ref{thm:ARE}
\begin{proof}[Proof of Theorem~\ref{thm:ARE}]
Combining the MSE result of $\bar{A}_{ij}$
\[
    \Ex[(\bar{A}_{ij} - P_{ij})^2] = \frac{P_{ij}(1-P_{ij})}{M},
\]
and Lemma \ref{lm:VarPhat}, i.e. for large enough $N$,
\[
    \Ex[(\replaced{\tilde{P}}{\hat{P}}_{ij} - P_{ij})^2] \approx
    \frac{1/\rho_{\tau_i} + 1/\rho_{\tau_j}}{M N} P_{ij}(1-P_{ij}),
\]
and therefore there is a large enough $N$,
\[
	    \mathrm{RE}(\bar{A}_{ij}, \replaced{\tilde{P}}{\hat{P}}_{ij}) 
	    = \frac{\Ex[(\replaced{\tilde{P}}{\hat{P}}_{ij} - P_{ij})^2]}{\Ex[(\bar{A}_{ij} - P_{ij})^2]}
	    \approx \frac{1/\rho_{\tau_i} + 1/\rho_{\tau_j}}{N}.
\]
And the ARE result follows directly by taking the limit of RE as $N\to \infty$.
\end{proof}


The proof for Theorem~\ref{thm:ARE} is now a simple application of the above lemmas to the ratio of the mean squared errors for $\bar{A}$ and $\hat{P}$.

\section{SBM Simulations} \label{app:sbm_sim}

\begin{figure}[!htbp]
    \centering
    \includegraphics[width=1\linewidth]{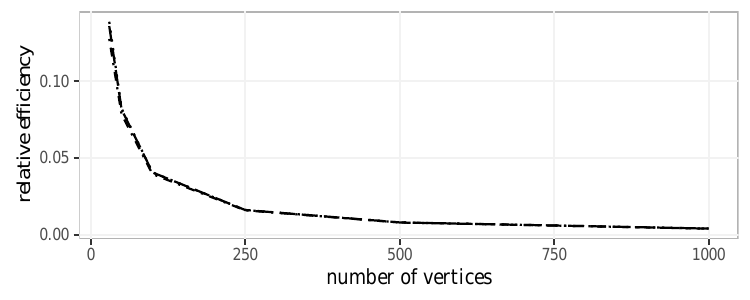}
    \includegraphics[width=1\linewidth]{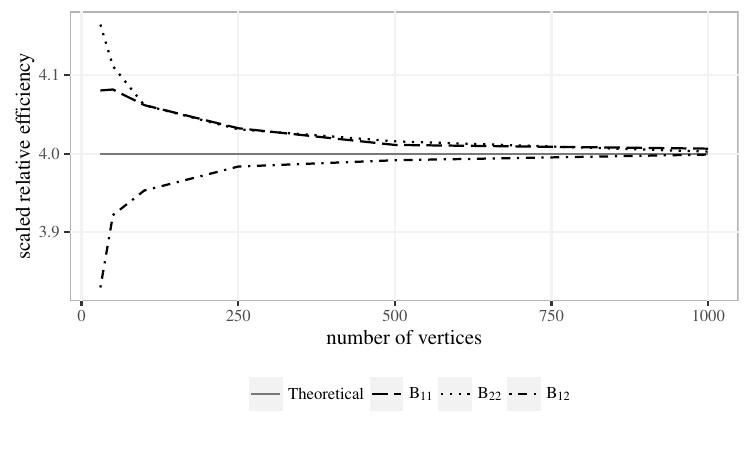}
    \caption{
    The top panel shows the estimated relative efficiency $\hat{\mathrm{RE}}(\bar{A},\hat{P})$ as a function of $N$ for fixed $M=100$ based on simulations of an SBM.
    For each value of $N$, 1000 Monte Carlo replicates of the SBM from Appendix~\ref{app:sbm_sim} estimated the RE.
    Each curve corresponds to an average across vertex pairs corresponding to the three distinct block probabilities $B_{11}$, $B_{12}$, and $B_{22}$ in the two-block SBM.
    Recall that values below 1 indicate that $\hat{P}$ is performing better than $\bar{A}$.
    To distinguish the three curves, the bottom panel shows the corresponding scaled relative efficiencies, $N\cdot \hat{\mathrm{RE}}(\bar{A},\hat{P})$.
    The solid horizontal line indicates the theoretical asymptotic scaled relative.}
    \label{fig:RE}
\end{figure}

In this section, the theoretical results from Section~\ref{section:theoretical_result} regarding the relative efficiency between $\bar{A}$ and $\hat{P}$ \textit{via} Monte Carlo simulation experiments in an idealized setting will be illustrated.
These numerical simulations will also allow us to investigate the finite sample performance of the two estimators.


Here, we consider the following 2-block SBM with parameters
\begin{equation*}
B = \begin{bmatrix}
0.42 & 0.2 \\
0.2 & 0.7
\end{bmatrix}
,\qquad \rho = \begin{bmatrix}
0.5 & 0.5
\end{bmatrix}.
\end{equation*}
When calculating $\hat{P}$, the dimension selection step from Algorithm~\ref{algo:basic} is omitted and replaced with the true dimension $d = \mathrm{rank}(B) = 2$.
Note that for large $N$, many dimension selection methods will often correctly select the true dimension \cite{chatterjee2015matrix,Fishkind2012}.
1000 Monte Carlo replicates were performed with the above SBM distribution with $N \in \{30, 50, 100, 250, 500, 1000 \}$

Since the relative efficiency only depends on the block memberships of the pair $i,j$, letting $D_{st} = \{(i, j): \tau_i=s,\tau_j=t,1 \le i < j \le n\}$, the relative efficiency for each block pair can be estimated using
\[
    \hat{\mathrm{RE}}_{st}(\bar{A},\hat{P}) = \frac{\sum_{(i, j) \in D_{st}} \hat{\mathrm{MSE}}(\hat{P}_{ij})}{\sum_{(i, j) \in D_{st}} \hat{\mathrm{MSE}}(\bar{A}_{ij})}
\]
for $s,t\in\{1,2\}$, where $\hat{\mathrm{MSE}}$ denotes the estimated mean squared error based on the Monte Carlo replicates.
For the remaining simulations and real data analysis, we will always be considering estimated relative efficiency and estimated mean squared error rather than analytic results, and hence we will frequently omit that these are estimated values. 

In Fig.~\ref{fig:RE}, we plot the (estimated) relative efficiency (top panel) and the scaled (estimated) relative efficiency (bottom panel), $N \cdot \hat{\mathrm{RE}}_{st}(\bar{A},\hat{P})$.
The different dashed lines denote the RE and scaled RE associated with different block pairs, either $B_{11}$, $B_{12}$, or $B_{22}$.
As expected from Theorem~\ref{thm:ARE}, the top panel indicates that the relative efficiencies are all very close together and much less than 1, decreasing at the rate of $1/N$, indicating that $\hat{P}$ is performing better than $\bar{A}$.

Based on Theorem~\ref{thm:ARE}, the scaled RE converges to $1/\rho_{\tau_i}+1/\rho_{\tau_j}=4$ as $N\to\infty$ for all pairs $i,j$.
This is plotted as a solid line in the bottom panel.
The figure shows that $N \cdot \hat{\mathrm{RE}}_{st}(\bar{A}, \hat{P})$ converges to scaled asymptotic RE quite rapidly.
Error bars were omitted, as the standard errors are very small for these estimates.

\begin{remark}
For small graphs, the estimates of the edge probabilities for pairs of vertices in different blocks are much better than the estimates for edges within each block.
The reason for this is unclear and could be due to the actual values of the true probability, but it may also be due to the fact that there are approximately twice as many pairs of vertices in different blocks, $N^2/4$, than there are in the same block, $N^2/8-N/4$.
This could lead to an increase in effective sample size which may cause the larger differences displayed in the left parts of Fig.~\ref{fig:RE}.
However, these differences are nearly indistinguishable for unscaled relative efficiency overall.
\end{remark}


\section{Synthetic Data Analysis for Full Rank IEM}\label{app:sim_iem}

While the theory we have developed is based on the assumption that the mean graph is low rank, as we have seen in Section~\ref{sec:connectome}, $\hat{P}$ can perform well even when this assumption is false.
To further illuminate this point, a synthetic data analysis under a more realistic full-rank independent edge model was performed.
As discussed in Section~\ref{sec:challenge}, the sample mean of the 454 graphs in the Desikan dataset is actually of full rank.
For this simulation, we will use the sample mean as the probability matrix $P$.
A sampling of independent graphs from the full rank IEM with the probability matrix $P$ show that for the synthetic data sets of size $M = 1, 5$, and $10$, $\hat{P}$ performs even better than $\bar{A}$ in the real data experiments. 
Fig.~\ref{fig:sim_desikan} shows the resulting estimated MSE for $\bar{A}$ (solid line) and $\hat{P}$ (dashed line), as a function of the embedding dimension for simulated data based on the full rank probability matrix $P$ shown in the left panel of Fig.~\ref{fig:Matrix_desikan_m5}.
These results are similar to those presented in Section~\ref{sec:connectome}, though overall $\hat{P}$ performs even better than in the real data experiments.
When $M$ is small, $\hat{P}$ outperforms $\bar{A}$ with a flexible range of embedding dimensions including those selected by the Zhu and Ghodsi method.
On the other hand, when $M$ is large enough, both estimators perform well with the decision between the two being less conclusive.
This simulation again shows the robustness of $\hat{P}$ to deviations from the RDPG model, specifically if the probability matrix is full-rank.

We also note that the finite-sample relative efficiency for this synthetic data is even more favorable to $\hat{P}$ than for the real data, with relative efficiency lower than $1/3$ for $M=1$ in the synthetic data analysis as compared to relative efficiency which were at best around $1/2$ for $M=1$ in the original data.
From this observation, we can postulate that the degradation in the performance of $\hat{P}$ in real data can at least partially be attributed to the fact that the independent edge assumption does not hold for real data.
It also suggests that more elaborate models of connectomes will be valuable for various inferential tasks.
{

\begin{figure}[!htbp]
\centering
\includegraphics[width=1\linewidth]{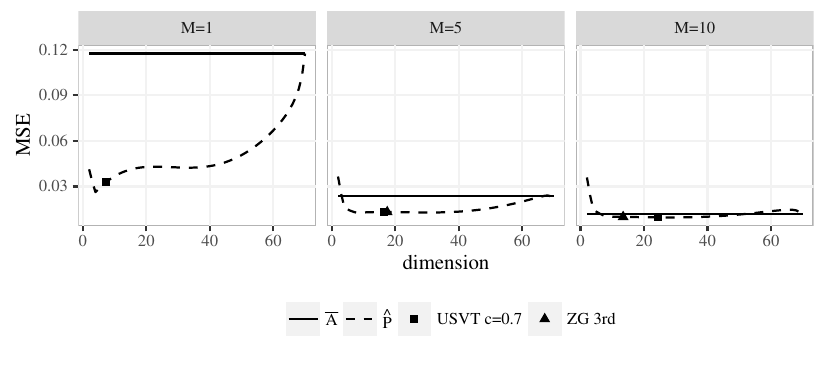}
\caption{
Comparison of $\hat{P}$ and $\bar{A}$ for synthetic data analysis.
As in Fig.~\ref{fig:RE}, this figure shows $\hat{\mathrm{MSE}}$ for $\bar{A}$ (solid line) and $\hat{P}$ (dashed line) for simulated data with different sample sizes $M$ based on the sample mean for the Desikan dataset. Again, the average of dimensions selected by the USVT method (square) and the ZG method (triangle) tend to nearly approximate the optimal dimension.
Overall, the structure of these plots well approximates the structure for the real data indicating that performance for the independent edge model will tend to translate in structure to non-independent edge scenarios.
On the other hand, the relative efficiency $\hat{\mathrm{RE}}(\bar{A},\hat{P})$ is lower for this synthetic data analysis than for the SWU4 data.}
\label{fig:sim_desikan}
\end{figure}

\section{Analysis of Mouse Superstructures}

Fig.~\ref{fig:mouse_violin} shows the analogous permutation analysis to that performed in Section~\ref{section:lobe_structure} and Fig.~\ref{fig:violin_plot} but for the mouse data described in Appendix~\ref{app:data_mouse}.
\begin{figure}[tbh!]
	\centering
	\includegraphics[width=\linewidth]{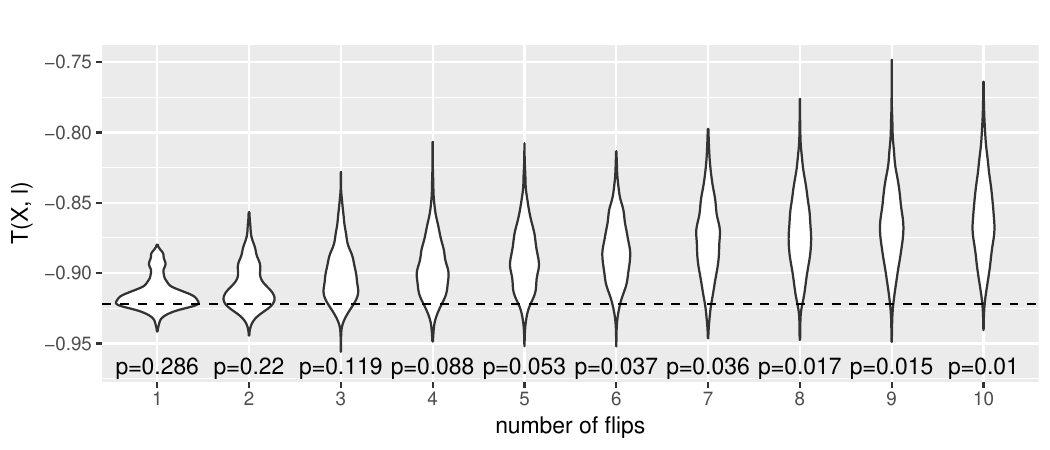}
	\caption{Violin plot for the mouse connectome permutation test. As in Fig.~\ref{fig:violin_plot}, 1000 simulations for each number of flips show that as the number of flips increases, the distribution under the null moves further from the dashed line. This indicates that the latent positions correlate with the superstructures even after conditioning on the spatial structure. Dashed line represents the situation based on true superstructure assignment.   }
	\label{fig:mouse_violin}
\end{figure}

\bibliographystyleapp{IEEEtran}

\bibliographyapp{Bib}

\end{document}